\numberwithin{equation}{subsection}
\newtheorem{theorem}{Theorem}[section]
\newtheorem{lemma}[theorem]{Lemma}
\newtheorem{definition}[theorem]{Definition}
\newtheorem{proposition}[theorem]{Proposition}
\newtheorem{corollary}[theorem]{Corollary}
\begin{document}
\title{A First-Order Logic for Reasoning about Knowledge and Probability}

\author{Siniša Tomović}
\address{Faculty of Technical Sciences, Novi Sad, Serbia}
\email{\url{sinisatom@turing.mi.sanu.ac.rs}}

\author{Zoran Ognjanović}
\address{Mathematical Institute of the Serbian Academy of Sciences and Arts, Belgrade, Serbia}
\email{\url{zorano@mi.sanu.ac.rs}}

\author{Dragan Doder}
\address{Universit\' e Paul Sabatier -- CNRS, IRIT, France}
\email{\url{dragan.doder@irit.fr}}

\maketitle
\begin{abstract}
We present a first-order probabilistic epistemic logic, which allows combining 
operators of knowledge  and probability within a group of possibly infinitely many agents. The proposed framework is the first order extension of the logic of Fagin and Halpern from   (J.ACM 41:340-367,1994).
We define its syntax and semantics, and prove the strong completeness property of the corresponding axiomatic system.\footnotemark

\textbf{Keywords}: probabilistic epistemic logic, strong completeness, probabilistic common knowledge, infinite number of agents
\end{abstract}

\footnotetext{This paper is revised and extended version of the conference paper \cite{Sini} presented at the Thirteenth European Conference on Symbolic and Quantitative Approaches to Reasoning with Uncertainty (ECSQARU 2015), in which we introduced the propositional variant of the logic presented here, using a similar axiomatization technique.}
\section{Introduction}

Reasoning about knowledge is  widely used in many applied fields such as computer science, artificial intelligence, economics, game theory etc \cite{aumann1976,Knjiga,Gean,Tuttle}. A particular line of research concerns the formalization 
in terms of multi-agent epistemic logics, that speak about  knowledge about facts, but also about knowledge of other agents. 
One of the central notions is that of common knowledge, which has been shown as crucial for a variety of applications dealing with reaching agreements or coordinated actions \cite{Distr}.
Intuitively,  $\varphi$ is common knowledge of a group of agents exactly when everyone knows
that everyone knows that everyone knows\dots that $\varphi$
is true.

%
%
%
%
%
%
%

However, it has been shown that
in many practical systems common knowledge cannot be attained \cite{Distr,DBLP:journals/apal/FaginHMV99}.
This motivated some researchers to consider a weaker variant that still may
be sufficient for carrying out a number of coordinated actions \cite{DBLP:journals/dc/PanangadenT92,Tuttle,DBLP:journals/jolli/MorrisS97}. 
One of the approaches proposes a probabilistic variant of common knowledge \cite{Monderer}, which assumes that coordinated actions hold with high probability.
A propositional logical system which formalizes that notion is presented in \cite{KP}, where Fagin and Halpern developed a joint framework for reasoning about knowledge and probability and proposed a complete axiomatization.

\smallskip

We use the paper  \cite{KP} as a starting point and generalize it in two ways:

\smallskip

First,  we extend  the propositional formalization from \cite{KP} by allowing reasoning about knowledge and probability of events expressible in \emph{a first-order language}. We use the most general approach, allowing arbitrary combination of standard epistemic operators, probability operators, first-order quantifiers and, in addition, of probabilistic common knowledge operator. 
The need for first-order extension is recognized by epistemic and probability logic communities. Wolter \cite{Wolter} pointed out that first-order common knowledge logics are of interest both from the point of
view of applications and of pure logic. He argued that first-order base is necessary whenever
the application domains are infinite (like in epistemic analysis
of the playability of games with mixed strategies), or finite, but with the cardinality of which is not known in advance, which is a frequent case in in the field of Knowledge Representation.
Bacchus \cite{DBLP:journals/ci/Bacchus90a} gave the similar argument in the context of probability logics,
arguing that, while a domain may be finite, it is questionable if there is a
fixed upper bound on its size,  and he also pointed out that there are many domains, interesting
for AI applications, that are not finite.


Second, we consider  \emph{infinite number of agents}. 
While this assumption is not of interest in probability logic, it was studied in epistemic logic. 
Halpern and Shore \cite{InfiHalp} pointed out that economies, when regarded as  teams in a game, are often modeled as having infinitely many  agents and that such modeling in epistemic logic is also convenient in the situations where the group of agents and its upper limit are not known apriori. 

\smallskip


The semantics for our logic consists of Kripke models enriched with probability spaces. Each possible world contains a first order structure, each agent in each world is equipped with a set of accessible worlds and a finitely additive probability on measurable sets of worlds.
In this paper we consider the most general semantics, with independent modalities for knowledge and probability. Nevertheless, in Section \ref{sec canonical} we show  how to modify the definitions and results of our logic, in order to capture some interesting relationships between the  modalities for knowledge and probability (previously considered in \cite{KP}), especially the semantics in which agents assign probabilities only to the sets of worlds they consider possible.

\smallskip

The main result of this paper is a sound and \emph{strongly complete} (``every consistent set of sentences is satisfiable") axiomatization. The negative result of Wolter \cite{Wolter}
shows that there is no finite way to
axiomatize first order common knowledge logics, and 
that infinitary axiomatizations  are the best we can do (see Section \ref{sec ax issues}).
We obtain completeness using  infinitary rules of inference. Thus, formulas are finite, while only proofs are allowed to be (countably) infinite. 
We use a Henkin-style  construction of saturated extensions of consistent theories. From the technical point of view, we  modify some of our earlier developed methods presented in \cite{Doder,Milos,Ognj4,PLogics}.\footnote{For the detailed overview of the approach, we refer the reader to \cite{DBLP:books/sp/OgnjanovicRM16}. A similar approach is later used in \cite{DBLP:journals/logcom/Zhou09}.} Although we use an alternative axiomatization for the epistemic part of logic (i.e., different from original axiomatization given in \cite{KP,Guide}),  we prove that standard axioms are derivable in our system.

\smallskip

There are several papers on completeness of epistemic logics with common knowledge. 

In \emph{propositional case}, a  finitary axiomatization, which is \emph{weakly complete} (``every consistent formula is satisfiable''), is obtained by Halpern and Moses \cite{Guide} using a fixed-point axiom for common knowledge. 
On the other hand, strong completeness for any finitary axiomatization is impossible, due to lack of   compactness  (see Section \ref{sec ax issues}).  
Strongly complete axiomatic systems are proposed in \cite{DBLP:journals/rml/Tanaka03,Kooi}. They contain an infinitary inference rule, similar to one of our rules\footnote{It is easy to check that our inference rule RC from Section \ref{section ax} generalize the rule from \cite{DBLP:journals/rml/Tanaka03,Kooi}, due to presence of probability operators.}, for capturing semantic relationship between the operators of group  knowledge and common knowledge.

In \emph{first-order case}, the set of valid formulas is not recursively enumerable \cite{Wolter} and, consequently, there is no complete finitary axiomatization. 
One way to overcome this problem is by including infinite formulas in the language as in \cite{Tanaka}. 
A logic with finite formulas, but an infinitary inference rule, is proposed in \cite{Kaneko2002}, while a Genzen-style axiomatization with an inifinitary rule is presented in  \cite{DBLP:journals/rml/Tanaka03}.
On the other hand, a finitary axiomatization of monadic fragments of the logic, without function symbols and  equality, is proposed in \cite{Sturm2002}.

\smallskip

Fagin and Halpern \cite{KP} proposed a joint frame for reasoning about knowledge and probability. 
Following the approach from \cite{HalpProb}, they extended the propositional epistemic language with formulas which express linear combinations of probabilities, i.e., the formulas of the form  $a_1p(\varphi_1)+ ... + a_kp(\varphi_k) \geq b$, where $a_1,.., a_k, b \in \mathbb Q$, $k \geq 1$. They proposed a finitary axiomatization and proved weak completeness, using the small model theorem. 
%
%
Our axiomatization technique is different. Since in the first order case a complete finitary axiomatization is not  possible, we use infinitary rules and we prove strong completeness using Henkin-style method. We use unary probability operators and we axiomatize the probabilistic part of our logic following the techniques from \cite{DBLP:books/sp/OgnjanovicRM16}. In particular, our logic incorporates the single-agent probability logic $LFOP_1$ from \cite{PLogics}. However, our approach can be easily extended to include linear combinations of probabilities, similarly as it was done in \cite{DBLP:conf/uai/DoderO15,DBLP:journals/amai/OgnjanovicMRDP12}.

\smallskip

We point out that all the above mentioned  logics do not support infinite group of agents, so the group knowledge operator is defined as the conjunction of knowledge of individual agents.   A weakly  complete axiomatization  for common knowledge with infinite number of agents (in  non-probabilistic setting) is presented in \cite{InfiHalp}. In our approach, the knowledge operators of groups and individual agents are related via an infinitary rule (RE from Section \ref{section ax}).

\smallskip

The rest of the paper is organized as follows: In Section 2 we introduce Syntax and Semantics. Section 3 provides the axiomatization of our logic system, followed by the proofs of its soundness. In Section 4 we prove several theorems, including Deduction theorem and Strong necessitation. The completeness result is proven in Section 5. Section 6 we consider an extension of our logic by incorporating \emph{consistency condition} \cite{KP}.
The concluding remarks are given in  Section 7.

\section{Syntax and sematics}

In this section we present the syntax and semantics of our logic, that we call $\mathcal PCK^{fo} $.\footnote{$\mathcal PCK  $ stands for ``probabilistic common knowledge'', while $fo $ indicates that our logic is a first-order logic.}
Since the main goal of this paper is to combine the epistemic first order logic with reasoning about probability, our language extends a  first order language with both epistemic  operators, and the operators for reasoning about probability and probabilistic knowledge.
We introduce the set of formulas based on this language and the corresponding possible world  semantics, and we define the  satisfiability relation.

\subsection{Syntax}

Let $[0,1]_{\mathbb Q}$ be the set of rational numbers from the real interval $[0,1]$, $\mathbb N$ the set of non-negative integers, $\mathcal A$ an at most countable set of agents, and $\mathcal G$ a countable set of nonempty subsets of $\mathcal A$.

The language $\mathcal L_{PCK^{fo}} $ of the logic $\mathcal PCK^{fo} $ contains: 
\begin{itemize}
	\item 
	a countable set of  variables $Var = \{ x_1, x_2, \dots\}$, 
	
	\item $m$-ary relation symbols $R_0^m, R_1^m, \dots$ and function symbols  $f_0^m, f_1^m, \dots$ for every integer $m \geq 0$, 
	
	\item Boolean connectives $\wedge$ and $\neg$, and the first-order quantifier $\forall$, 
	
	\item unary  modal knowledge operators $K_i, E_G, C_G$, for every  $i \in \mathcal A$ and  $G \in \mathcal G$,
	
	\item unary   probability operator $P_{i, \geq r}$ and the operators for  probabilistic knowledge $E_G^r $ and  $C_G^r $, where $i \in \mathcal A$, $G \in \mathcal G$, $r \in [0,1]_{\mathbb Q}$. 
	
\end{itemize}

By the standard convention, constants are $0-$ary function symbols.
Terms  and atomic formulas are defined in the same way  as in the classical first-order logic. 

\begin{definition}[Formula]
	The set of formulas $For_{PCK^{fo}} $ is the least set  containing all atomic formulas such that: if  $\varphi, \psi \in \mathcal For_{PCK^{fo}} $ then $\neg \varphi$, $\varphi \wedge \psi$, $K_i \varphi$, $E_G \varphi$, $C_G \varphi$, $E_G^r \varphi$, $C_G^r \varphi$, $P_{i, \geq r}\varphi \in \mathcal For_{PCK^{fo}}  $, for every $i \in \mathcal A$, $G \in \mathcal G$ and  $r \in [0,1]_{\mathbb Q}$.
	%
\end{definition}

We use the standard abbreviations to introduce  other Boolean connectives  $ \ \to$, $\vee$ and $\leftrightarrow$, the quantifier $\exists$ and the symbols $\perp, \top$.
We also introduce the operator $K_i^r$ (for  $i \in \mathcal A$  and  $r \in [0,1]_{\mathbb Q}$) in the following way: the formula $K_i^r \varphi$ abbreviates  $K_i(P_{i,\geq r}{\varphi})$.

The meanings of the operators of our logic are  as follows.
\begin{itemize}
	\item $K_i\varphi$ is read as \emph{``agent i knows $\varphi$''} and   $E_G\varphi$ as \emph{``everyone in the group $G$ knows $\varphi$''}. The formula $C_G\varphi$ is read \emph{``$\varphi$ is common knowledge among the agents in $G$''},  which means that  everyone (from $G$) knows $\varphi$, everyone knows that
	everyone knows $\varphi$, etc.
	
	
	\smallskip
	
	\noindent\emph{Example.}
	The sentence \emph{``everyone in the group $G$ knows that if agent $i$ doesn't know $\varphi$, then $\psi$ is common knowledge in  $G$''}, is written as  $$E_G ( \neg K_i \varphi \to C_G \psi ).$$

	\item The probabilistic formula $P_{i,\geq r}{\varphi}$ says that \emph{the probability that formula $\varphi$ holds is at least $r$ according to the agent $i$}.

	\smallskip

	\item	$K_i^r \varphi$ abbreviates the formula $K_i(P_{i,\geq r}{\varphi})$. It means that \emph{agent $i$ knows that the probability of $\varphi$ is at least $r$}.
	
	\smallskip
	
	\noindent\emph{Example.} Suppose that agent $i$ considers two only possible scenarios for an event $\varphi$, and that each of these scenarios puts a different probability space on events. In the first scenario, the probability of $\varphi$ is $1/2$, and in the second one it is  $1/4$. Therefore, the agent knows that probability of $\varphi$ is at least $1/4$, i.e., $K_i(P_{i,\geq 1/4}{\varphi})$.
	
	\smallskip
	
	\item 	$E_G^r \varphi$ denotes that \emph{everyone in the group $G$ knows that the probability of $\varphi$ is at least $r$}. Once $K_i^r \varphi$ is introduced,  $E_G^r$ is defined as  a straightforward probabilistic  generalization of the operator $E_G$.
	
	\smallskip
	
	\item 	 $C_G^r \varphi$ denotes that it is a \emph{common knowledge in the group $G$ that the probability of $\varphi$ is at least $r$}. For a given threshold $r\in [0,1]_{\mathbb Q}$, $C_G^r $ represents a generalization of non-probabilistic operator $C_G $.
	
	\smallskip
	
	\noindent\emph{Example.} The formula $$E_G^s ( K_i (\exists x) \varphi(x) \wedge \neg C_G^r \psi )$$ says that  \emph{everyone in the group $G$ knows that the probability that both   agent $i$ knows that $\varphi(x)$ holds for some $x$, and that  $\psi$ is not common knowledge among the agents in $G$ with probability at least $r$, is at least $s$}.

\end{itemize}

Note that the other types of probabilistic operators can also be introduced as abbreviations:  $P_{i,<r}{\varphi}$ is $ \neg P_{i,\geq r}{\varphi}$,  $P_{i,\leq r}{\varphi}$ is $ P_{i,\geq 1-r}{\neg \varphi}$,  $P_{i,>r}{\varphi}$ is $\neg P_{i,\leq r}{\varphi}$ and $P_{i,=r}{\varphi}$ is $P_{i,\leq r}{\varphi} \wedge P_{i,\geq r}{\varphi} $.

\smallskip

Now we define what we mean by a \emph{sentence} and a \emph{theory}. The following definition uses the notion \emph{free variable}, which is defined in the same way  as in the classical first-order logic. 
\begin{definition}[Sentence]
	A formula with no free variables is called a sentence. The set of all sentences is denoted by $Sent_{PCK^{fo}}$. A set of  sentences  is called  \emph{theory}.
\end{definition}

Next we introduce a special kind of formulas in the implicative form, called \emph{$k$-nested implications}, which will have an important role in our axiomatization. 

\begin{definition}[$k$-nested implication]\label{def k nested}
	Let   $\tau \in For_{PCK^{fo}}$ be a formula and let and $k\in \mathbb N$.
	Let 	 $\boldsymbol{\uptheta} =(\theta_{0}, \dots, \theta_{k})$  be a sequence of $k$ formulas, and $\mathbf{X} = (X_{1}, \dots, X_{k})$ a sequence of  knowledge and probability operators  from $ \{ \mathcal K_i \, | \, i \in \mathcal A\} \cup \{ P_{i,\geq 1} \, | \, i \in \mathcal A\}$.
	The \emph{$k$-nested implication} formula $\Phi_{k, \boldsymbol{\uptheta}, \mathbf{X}}(\tau)$  is  defined inductively, as follows:

	\[
	\Phi_{k, \boldsymbol{\uptheta}, \mathbf{X}}(\tau) =\left\{
	\begin{array}{ll}
	\theta_0 \to \tau, \,\, k=0\\
	\theta_{k} \to X_{k}\Phi_{k-1}(\tau,(\theta, X)_{j=0}^{k-1}), \,\, k \geq 1.
	\end{array}
	\right.
	\]
	
\end{definition}
%
%
%



For example, if $\mathbf{X} = (K_a, P_{b,\geq 1}, K_c)$,  $a,b,c \in \mathcal A$, then

\[
\Phi_{3, \boldsymbol{\uptheta}, \mathbf{X}}(\tau) = \theta_3 \to K_c( \theta_2 \to P_{b,\geq 1}(\theta_1 \to K_a(\theta_0 \to \tau))).
\]

The structure of these $k$-nested implications is shown to be  convenient for the proof of Deduction theorem (Theorem \ref{deduc}) and Strong necessitation theorem (Theorem \ref{thm strong necessitation}).


\subsection{Semantics}
\label{sec semantics}

The semantic approach for $PCK^{fo}$  extends the classical possible-worlds model for epistemic logics, with probabilistic spaces. 

\begin{definition}[$PCK^{fo}$ model]\label{def model}
	
	A $PCK^{fo}$ model is  a \emph{Kripke structure for knowledge and probability} 
	which is represented by a tuple $$M=(S,D,I,\mathcal  K, \mathcal  P),$$ where:
	\begin{itemize}
		\item $S$ is a nonempty set of \emph{states} (or \emph{possible worlds})
		
		\item $D$ is a nonempty domain
		
		\item $I$ associates an interpretation $I(s)$ with each state $s$ in $S$ such that for all $i \in \mathcal A$ and all $k, m \in \mathbb N$:
		
		\begin{itemize}
			
			\item  $I(s)(f^m_k)$ is a function from $D^m$ to $D$,
			
			\item for each $s' \in S$, $I(s')(f^m_k) = I(s)(f^m_k)$
			
			\item $I(s)(R^m_k)$ is a subset of $D^m$,
			
		\end{itemize}
		
		\item $\mathcal K = \{\mathcal K_i \, | \, i \in \mathcal A \}$ is a set  of binary relations on $S$. We denote $\mathcal K_i(s){\buildrel def\over=} \{  t \in s \, | \, (s,t) \in \mathcal K_i     \}$, and write $s \mathcal K_i t$ if $t \in \mathcal K_i(s)$.
		
		\item $\mathcal P$ associates to every agent $i \in \mathcal A$ and every state $s \in S$ a probability space $\mathcal P (i,s) = (S_{i, s}, \chi_{i, s}, \mu_{i, s})$, such  that

		\begin{itemize}

			\item $S_{i, s}$ is a non-empty subset of $S$,

			\item $\chi_{i, s}$ is an algebra of subsets of $S_{i, s}$, whose elements are called \emph{measurable sets}, and

			\item $\mu_{i, s}: \chi_{i, s}\to [0,1]$ is a finitely-additive probability measure ie. 
			
			\begin{itemize}
				
				
				\item $\mu_{i, s}(S_{i, s})=1$ and 
				
				\item $\mu_{i, s}(A \cup B) = \mu_{i, s}(A) +\mu_{i, s}(B)$ if $A \cap B = \emptyset, A,B \in \chi_{i, s}$.
				
			\end{itemize}
		\end{itemize}
	\end{itemize}
	
\end{definition}

In the previous definition we assume that  the domain is fixed (i.e., the domain is  same in all the worlds) and that 
the terms are rigid, i.e., for every model their meanings are the same in all worlds. Intuitively, the first assumption means that it is common knowledge which objects
exist. Note that the second assumption implies that it is common knowledge which
object a constant designates. As it is pointed out in \cite{Sturm2002}, the first  assumption is natural for all those application domains that
deal not with knowledge about the existence of certain objects, but rather with knowledge about facts.
Also, the two  assumptions  allow us to give
semantics of probabilistic formulas which is similar to the objectual interpretation
for first order modal logics
\cite{Garson2001}.

Note that those standard assumptions for modal logics are essential to ensure  validity of all first-order axioms.
For example, if the terms are not rigid, the classical first order axiom 
$$\forall \varphi(x) \to \varphi(t),$$ where  the term $t$ is free for $x$ in $\varphi$,
would not be valid (an example is given in 
\cite{DBLP:journals/ai/Halpern90}). Similarly, Barcan formula (axiom FO3 in Section \ref{section ax}) holds only for fixed domain models.

\smallskip


For a model $M = (S,D,I,\mathcal  K, \mathcal  P)$ be a $PCK^{fo}$, the notion of \textit{variable valuation} is defined in the usual way: a variable valuation $v$ is a function which assigns the elements of the domain to the variables, ie., $v: Var \to D$. If $v$ is a valuation, then $v[d/x]$ is a valuation identical to $v$, with exception that $v[d/x](x)=d$.
\begin{definition}[Value of a term]
	The value of a term $t$ in a state $s$ with respect to $v$, denoted by $I(s)(t)_v$, is defined in the following way:
	\begin{itemize}
		
		\item if $t \in Var$, then $I(s)(t)_v=v(t)$,
		\item if $t = F^{k}_j(t_1, \dots, t_k)$, then $I(s)(t)_v = I(s)(F^{k}_j)(I(s)(t_1)_v, \dots, I(s)(t_k)_v)$.
		
	\end{itemize}
\end{definition}

The next definition will use the following knowledge operators, which we introduce in the inductive way:

\begin{itemize}
	\item $(E_G)^1 \varphi = E_G \varphi$
	\item $(E_G)^{m+1} \varphi=E_G((E_G )^k \varphi)$, $m \in \mathbb N$
	\item $ (F_G^r)^0 \varphi = \top$
	\item $ (F_G^r)^{m+1} \varphi = E_G^r(\varphi \wedge (F_G^r)^{m} \varphi)$, $m \in \mathbb N$.
\end{itemize}

Now we define \emph{satisfiability} of formulas from in the states of introduced models.

\begin{definition}[Satisfiability relation]\label{satdef}
	\emph{Satisfiability} of formula $\varphi$ in a state $s \in S$ of a model $M$, under a valuation $v$, denoted by $$(M,s,v) \models \varphi,$$ is defined in the following way:
	
	\begin{itemize}
		
		\item $(M,s,v) \models P^{k}_j(t_1, \dots, t_k)$ iff $(I(s)(t_1)_v, \dots, I(s)(t_k)_v) \in I(s)(P^{k}_j)$
		
		\item $(M,s,v)\models \neg \varphi$   iff    $(M,s,v)\not\models \varphi$
		
		\item $(M,s,v)\models \varphi \wedge \psi$   iff   $(M,s,v)\models \varphi$  and   $(M,s,v)\models \psi$
		
		\item $(M,s,v)\models (\forall x)\varphi$ iff for every $d \in D$, $  (M,s,v[d/x])\models \varphi$
		
		\item  $(M,s,v)\models K_i \varphi$   iff   $(M,t,v)\models \varphi$     for all $t \in \mathcal  K_i(s)$
		
		\item  $(M,s,v)\models E_G \varphi$   iff  $(M,s,v)\models K_i \varphi$  for all  $i \in G$
		
		\item   $(M,s,v)\models C_G \varphi$   iff   $(M,s,v)\models (E_G)^m \varphi$  for every $m \in \mathbb N$

		\item $(M,s,v)\models P_{i,\geq r}{\varphi}$  iff $\mu_{i, s} (\{ t \in S_{i, s} \, | \, (M,t,v) \models \varphi\})  \geq r$

		\item $(M,s,v)\models E_G^r \varphi$ iff $(M,s,v) \models K_i^r \varphi$ for all $i \in G$

		\item $(M,s,v)\models C_G^r \varphi$   iff  $(M,s,v)\models (F_G^r)^m \varphi$   for every $m \in \mathbb N$
		
	\end{itemize}
	
\end{definition}

\paragraph{Remark.}
The semantic definition of the probabilistic common knowledge operator $C_G^r$ from the last item of Definition \ref{satdef} is first proposed by Fagin and Halpern in \cite{KP}, as a generalization of the operator $C_G$ regarded as the infinite conjunction of all degrees of group knowledge. It is important to mention that this is not the only proposal for generalizing the nonprobabilistic case. Monderer and Samet \cite{Monderer} proposed a more intuitive definition, where probabilistic common knowledge is semantically  equivalent to the infinite conjunction of the formulas $E_G^r \varphi, (E_G^r)^2 \varphi, (E_G^r)^3 \varphi \dots $
Although both are legitimate probabilistic generalizations, in this paper we accept the definition of Fagin and Halpern \cite{KP}, who argued that their proposal  seems more adequate for the analysis of problems like probabilistic coordinated attack and Byzantine agreement protocols \cite{Tuttle}.
As we point out in the Conclusion, our axiomatization approach can be easily modified in order to capture the definition of Monderer and Samet.

\smallskip

\smallskip

If $(M,s,v)\models \varphi$ holds for every valuation $v$ we write $(M,s)\models \varphi$. If $(M,s)\models \varphi$ for all $s \in S$, we write $M \models \varphi$. 


\begin{definition}[Satisfiability of sentences]
	A sentence $\varphi$ is \emph{satisfiable} if there is a state $s$ in some model $M$ such that $(M,s)\models \varphi$.  A set of sentences $T$ is satisfiable if there exists a state $s$ in a model $M$ such that $(M,s)\models \varphi$ for each $\varphi \in T$. A sentence $\varphi$ is \emph{valid}, if $\neg\varphi$ is not satisfiable. 
\end{definition}

Note that in the previous definition the satisfiability of sentences doesn't depend on a valuation, since they ton't contain any free variable.

In order to keep the satisfiability relation well-defined, here we consider only the  models in which all the sets of the form $$[\varphi] _{i, s}^v = \{ s \in S_{i, s} \, | \, (M,s,v) \models \varphi\},$$ are measurable.

\begin{definition}[Measurable model]
	A model $M=(S,D,I,\mathcal  K, \mathcal  P)$ is a \emph{measurable models} if $$[\varphi] _{i, s}^v  \in \chi_{i, s}, $$ for every formula $\varphi$, valuation $v$, state $s$ and agent $i$. We denote the class of all these models as $\mathcal M_{\mathcal A}^{MEAS}$. 
\end{definition}

Observe that if $\varphi$ is a sentence then the set $[\varphi] _{i, s}^v$ doesn't depend on $v$, thus we relax the notation by denoting  it by $[\varphi] _{i, s}$. Also, we write $\mu_{i, s} ([\varphi])$ instead of $\mu_{i, s} ([\varphi]_{i, s})$.

\subsection{Axiomatization issues}\label{sec ax issues} 

At the end of this section we analyze two common characteristics of epistemic logics and probability logics, which have impacts on their axiomatizations.

The first one is the \emph{non-compactness} phenomena -- there are unsatisfiable sets of formulas such that all their finite subsets are satisfiable.
The existence of such sets in epistemic logic is a consequence of the fact that the common knowledge operator $C_G$ can be semantically seen as an infinite conjunction of all the degrees of the group knowledge operator $E_G$, which leads to the example $$\{(E_G )^m \varphi \, | \, m\in \mathbb N \} \cup \{ \neg C_G \varphi  \}.$$ 
In real-valued probability logics, a standard example of unsatisfiable set whose finite subsets are all satisfiable is
$$\{  P_{i,\geq 1 - \frac{1}{n} }\varphi \, | \, m\in \mathbb N   \} \cup \{ \neg P_{i, \not = 1} \varphi  \}, $$
where $\varphi$ is a satisfiable sentence which is not valid.
One significant consequence of non-compactness is that there is no finitary axiomatization which is strongly complete \cite{vdH97}, i.e., simple completeness is the most one can  achieve.

In the first order case, situation is even worse. Namely, the set of valid formulas is not \emph{recursively enumerable}, neither for first order logic with common knowledge \cite{Wolter} nor for first order probability logics \cite{AH94} (moreover, even their monadic fragments suffer from the same drawback \cite{DBLP:books/sp/OgnjanovicRM16,Wolter}). This means that there is no finitary axiomatization which could be (even simply) complete. 
An approach for overcoming  this issue, proposed by Wolter \cite{Wolter}, is to consider infinitary logics as the only interesting alternative.

In this paper, we  introduce the axiomatization with \emph{$\omega$-rules} (inference rules with countably many premises) \cite{PLogics,Kooi}. This allows us to keep the object language  countable, and to move infinity to meta language only: the  formulas are finite, while only proofs are allowed to be infinite.

\section{The axiomatization $Ax_{PCK^{fo}}$}\label{section ax}

In this section we introduce the axiomatic system for the logic $PCK^{fo}$, denoted by $Ax_{PCK^{fo}}$. It  consists of the following axiom schemata and rules of inference:\\

I First-order axioms and rules\\

\hspace{4ex}Prop.  All instances of tautologies of the propositional calculus\\

\hspace{4ex}MP. $  \dfrac{\varphi, \varphi \to \psi }{\psi }$ (Modus Ponens)\\

\hspace{4ex}FO1. $\forall x (\varphi \to \psi) \to (\varphi \to \forall x \psi ) $, where $x$ is not a free variable un $\varphi$ \label{FO1}

\hspace{4ex}FO2. $\forall \varphi(x) \to \varphi(t)$, where $\varphi (t)$ is the result of substitution of all free occurences of $x$ in $\varphi(x)$ 

\hspace{9ex} by a term $t$ which is free for $x$ in $\varphi(x)$ \label{FO2}

\hspace{4ex}FO3. $\forall x K_i \varphi(x) \to K_i \forall x  \varphi(x)$ (Barcan formula)\label{FO3}

\hspace{4ex}FOR.  $ \dfrac{ \varphi}{ \forall x \varphi }$\label{FOR}  \\

II Axioms and rules for reasoning about knowledge\\

\hspace{4ex}AK.  $(K_i \varphi \wedge K_i(\varphi \to \psi) )\to  K_i\psi$, $i \in G$\label{AK} (Distribution Axiom)

\hspace{4ex}RK. $ \dfrac{ \varphi}{ K_i \varphi }$\label{RK}  (Knowledge Necessitation)

\hspace{4ex}AE. $E_G\varphi \to K_i \varphi$, $ i \in G$\label{AE} 

\hspace{4ex}RE.  $  \dfrac{ \{ \Phi_{k, \boldsymbol{\uptheta}, \mathbf{X}}(K_i \varphi) \, | \,  i \in G\} }{   \Phi_{k, \boldsymbol{\uptheta}, \mathbf{X}}(E_G \varphi)}$\label{RE}\\

\hspace{4ex}AC. $C_G\varphi \to (E_G)^m\varphi, m \in \mathbb{N}$ \label{AC}

\hspace{4ex}RC.  $  \dfrac{ \{ \Phi_{k, \boldsymbol{\uptheta}, \mathbf{X}}((E_G)^m \varphi)\, | \,  m \in \mathbb N\} }{  \Phi_{k, \boldsymbol{\uptheta}, \mathbf{X}}(C_G \varphi)}$\label{RC}\\

III Axioms and rule for reasoning about probabilities\\

\hspace{4ex}P1. $P_{i,\geq 0}{\varphi}$\label{P1}

\hspace{4ex}P2. $P_{i,\leq r}{\varphi} \to P_{i,< t}{\varphi}$, $t>r$\label{P2}

\hspace{4ex}P3. $P_{i,< t}{\varphi} \to P_{i,\leq t}{\varphi}$\label{P3}

\hspace{4ex}P4. $(P_{i,\geq r}{\varphi} \wedge P_{i,\geq t}{\psi} \wedge P_{i, \geq 1}{\neg (\varphi \wedge \psi) }) \to P_{i,\geq min(1, r+t)}(\varphi \vee \psi) $\label{P4}

\hspace{4ex}P5. $(P_{i,\leq r}{\varphi} \wedge P_{i,< t}{\varphi}) \to P_{i,< r+ t}{(\varphi \vee \psi)} $, $r+t \leq 1$\label{P5}

\hspace{4ex}RP. $ \dfrac{ \varphi}{ P_{i, \geq 1}{\varphi} }$  \label{RP}
(Probabilistic Necessitation)

\hspace{4ex}RA.  $  \dfrac{ \{ \Phi_{k, \boldsymbol{\uptheta}, \mathbf{X}}(P_{i,\geq r - \frac{1}{m}}{\varphi})\, | \, \mbox{$m \geq \frac{1}{r}$, $m \in \mathbb N$ }\} }{  \Phi_{k, \boldsymbol{\uptheta}, \mathbf{X}}(P_{i,\geq r}{\varphi} )}$, $r \in (0,1]_{\mathbb Q}$ (Archimedean rule)\label{RA}\\


IV Axioms and rules for reasoning about probabilistic knowledge\\

\hspace{4ex}APE. $E_G^r\varphi \to K_i^r \varphi$, $ i \in G$\label{APE}

\hspace{4ex}RPE.  $  \dfrac{ \{ \Phi_{k, \boldsymbol{\uptheta}, \mathbf{X}}(K_i^r \varphi) \, | \,  i \in G\} }{   \Phi_{k, \boldsymbol{\uptheta}, \mathbf{X}}(E_G^r \varphi)}$\label{RPE}\\

\hspace{4ex}APC. $C_G^r\varphi \to (F_G^r)^m\varphi,\, m \in \mathbb N $ \label{APC}

\hspace{4ex}RPC.  $  \dfrac{ \{ \Phi_{k, \boldsymbol{\uptheta}, \mathbf{X}}(F_G^r)^m \varphi)\, | \,  m \in \mathbb N\} }{  \Phi_{k, \boldsymbol{\uptheta}, \mathbf{X}}(C_G^r \varphi)}$\label{RPC}\\

The given axioms and rules are divided in four groups, according to the type of reasoning.
The first group contains the standard axiomatization for first-order logic and, in addition, a  variant of the well-known axiom for modal logics, called Barcan formula. It is proved that Barcan formula holds in the class of all first-order fixed domain modal models,
and that it is independent from the other  modal axioms 
\cite{hughes1968introduction,hughes1984companion}.
The second group contains axioms and rules for epistemic reasoning. AK and RK are classical Distribution axiom and Necessitation rule for the knowledge operator. The axiom AE and the rule RE are novel; they properly relate the knowledge operators and the operator of group knowledge $E_G$, regardless of the cardinality  of the group $G$. Similarly, AC and RC properly relate the operators $E_G$ and $C_G$. The infinitary rule RC is a generalization of the rule $InfC$ from \cite{Kooi}.
The third group contains multi-agent variant of a standard axiomatization for reasoning about probability \cite{DBLP:books/sp/OgnjanovicRM16}. The infinitary rule RA is a variant of so called Archimedean rule, generalized by incorporating the $k$-nested implications, in a similar way as it has been done in \cite{Milos} in purely probabilistic settings. This rule informally says that if probability of a formula is considered by an agent $i$ to be arbitrary close to some number $r$, then, according to the agent $i$, the probability of the fomula must be equal to $r$. 
The last group consist of novel axioms and rules which allow reasoning about probabilistic knowledge. They properly capture the semantic relationship between the operators $K_i^r$, $E_G^r$, $F_G^r$ and $C_G^r$, and they are in spirit similar to the last four axioms and rules from the second group.

Note that we use the structure of these $k$-nested implications in all of our infinitary inference rules. As we have already mentioned, the reason is that this form allows us to prove  Deduction theorem  and Strong necessitation theorem.
Note that by choosing $k=0$, $\theta_0=\top$ in the inference rules RE, RC, RPE, RPC, we obtain the intuitive forms of the rules:\\

$  \dfrac{ \{ K_i \varphi,  \, | \,  i \in G\} }{  E_G \varphi},  \dfrac{ \{ (E_G)^m \varphi\, | \, \forall m \geq 1\} }{  C_G \varphi}, \dfrac{ \{ K_i^r \varphi\, | \,  i \in G\} }{  E_G^r \varphi}, \dfrac{ \{ (F_G^r)^m \varphi, \, | \, \forall  m \geq 0\} }{  C_G^r \varphi}$.\\

Next  we define some basic notions of proof theory.

\begin{definition}
	
	A formula $\varphi $ is a \emph{theorem}, denoted by $\vdash \varphi$, if there is an at most countable sequence of formulas $\varphi_0, \varphi_1, \ldots , \varphi_{\lambda+1}$ ($\lambda$ is a finite or countable ordinal\footnote{Ie. the length of a proof is an at most countable \emph{successor} ordinal.})
	of
	formulas from $For_{PCK^{fo}} $, such that  $\varphi_{\lambda+1}=\varphi$, and
	every $\varphi_i$ is an instance of some axiom schemata or is obtained from the preceding formulas by an inference rule.  
	
	A formula $\varphi$ is \emph{derivable from} a set $T$ of formulas ($T \vdash \varphi$) if there is an at most countable sequence of formulas $\varphi_0, \varphi_1, \ldots , \varphi_{\lambda+1}$ ($\lambda$ is a finite or countable ordinal)  such that  $\varphi_{\lambda+1}=\varphi$, and each $\varphi_i$ is an instance of some axiom schemata or a formula from the set $T$, or it is obtained from the previous formulas by an inference rule, with the exception that the premises of the inference rules RK and RP  must be theorems. 
	The corresponding sequence of formulas is a \textit{proof} for $\varphi$ from $T$.
	
	A set of formulas $T$ is \emph{deductively closed} if it contains all the formulas derivable from $T$, i.e.,  $\varphi \in T$ whenever $T \vdash \varphi$.
	
\end{definition}

Obviously, a formula is a theorem iff it is derivable from the empty set. Now we introduce the notions of consistency and maximal consistency.

\begin{definition}
	A set $T$ of formulas is \emph{inconsistent} if $T \vdash \varphi$ for every formula $\varphi$, otherwise it is \emph{consistent}.  A set $T$ of formulas is \textit{maximal consistent} if it is consistent, and each proper superset of $T$ is inconsistent.
	
\end{definition}

It is easy to see that $T$ is inconsistent iff $T \vdash \bot$.

In the proof of completeness theorem, we will use a special type of maximal consistent sets, called saturated sets.

\begin{definition}
	
	A  set $T$ of formulas  is \emph{saturated} iff it is maximal consistent and the following condition holds:
	
	\begin{itemize}
		\item[] \quad \quad \quad \quad
		if $\neg(\forall x) \varphi(x) \in T$, then  there is a term $t$ such that $\neg\varphi(t) \in T$. \end{itemize}
	
\end{definition}

Note the notions of \emph{deductive closeness, maximal consistency} and \emph{saturates sets} are defined for formulas, but they can be defined for \emph{theories} (sets of sentences) in the same way.  We omit the formal definitions  here, since they would have the identical form as the ones above, but we will use the mentioned notions in the following sections.

\smallskip





The following result shows that the proposed  axioms from $Ax_{PCK^{fo}}$ are valid, and the inference rules preserve validity.

\begin{theorem}[Soundness] The axiomatic system $Ax_{PCK^{fo}}$ is sound with respect to the class of $PCK^{fo}$ models. 
\end{theorem}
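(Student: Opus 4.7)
The plan is the standard one: verify that every instance of each axiom is satisfied at every state of every measurable $PCK^{fo}$ model under every variable valuation, and that each rule of inference preserves validity in every fixed model. For the infinitary rules it is convenient to state the preservation claim pointwise (``if every premise holds at $s$ in $M$ under $v$, then the conclusion holds at $s$ in $M$ under $v$''), so that an induction on $k$ can be carried out against the $k$-nested implication structure.

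The routine part covers Groups I--III. Prop and MP are classical; FO1, FO2, and FOR reduce to standard first-order valuation bookkeeping, using that $I(s)(f^m_k)$ is state-independent, so that substitution is interpreted correctly. FO3 (Barcan) uses the fixed-domain assumption in Definition \ref{def model}: from $\forall x\, K_i \varphi(x)$ at $s$ one gets, for every $d \in D$ and every $t \in \mathcal K_i(s)$, satisfaction of $\varphi(x)$ at $t$ under $v[d/x]$, hence $\forall x\, \varphi$ at $t$, hence $K_i \forall x\, \varphi$ at $s$. AK and RK are the usual relational-semantics arguments for $K_i$. AE, AC, APE and APC follow directly from the semantic clauses for $E_G$, $C_G$, $E_G^r$, $C_G^r$, which are defined respectively as finite (or group-indexed) conjunctions and as infinite conjunctions of the $(E_G)^m$ or $(F_G^r)^m$. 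The probability axioms P1--P5 and the rule RP are standard consequences of $\mu_{i,s}$ being a finitely-additive probability measure on the algebra $\chi_{i,s}$, handled as in the single-agent logic $LFOP_1$ of \cite{PLogics}; measurability of the relevant sets is delivered by the measurable-model assumption.

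The substantive work concerns the infinitary rules RE, RC, RPE, RPC, and RA. For each I would proceed by induction on $k$ with the pointwise preservation claim described above. At the base case $k = 0$ the nested implication is just $\theta_0 \to \tau$, and, once $\theta_0$ is discharged, the claim reduces to the semantic clause for the operator or threshold being introduced: for RE and RPE, the conjunction over $i \in G$; for RC and RPC, the infinite conjunction defining $C_G$ or $C_G^r$; for RA, the archimedean property, since $\mu_{i,s}([\varphi]) \geq r - 1/m$ for every $m \geq 1/r$ forces $\mu_{i,s}([\varphi]) \geq r$. In the inductive step $\Phi_k = \theta_k \to X_k \Phi_{k-1}$, after discharging $\theta_k$ I split on $X_k$: if $X_k = K_j$, the inductive hypothesis applies at every $t \in \mathcal K_j(s)$; if $X_k = P_{j, \geq 1}$, I transport the inductive hypothesis inside the relevant measure-one event, using measurability and the semantic inclusion of the form $\bigcap [\Phi_{k-1}(\text{premise})]_{j,s} \subseteq [\Phi_{k-1}(\text{conclusion})]_{j,s}$ obtained from the inductive hypothesis applied pointwise.

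The hardest step I anticipate is exactly the last case: $X_k = P_{j, \geq 1}$ combined with an infinite index set (infinite $G$ in RE/RC/RPE/RPC, or the countable index $m$ in RA and in the iterated RC/RPC). Finite additivity alone does not immediately yield that intersections of infinitely many measure-one sets remain measure-one, so the argument has to exploit the specific syntactic shape of the $\Phi_{k-1}(\cdot)$ and the fact that validity of the premises is assumed at \emph{every} state of the model, not only at $s$; this global hypothesis, applied to the pointwise induction together with the measurability of each $[\Phi_{k-1}(\cdot)]_{j,s}$, is what should let the inductive step go through within the axiomatic setup. Collecting the four infinitary cases with this recipe, together with the routine axioms verified earlier, completes the soundness proof.
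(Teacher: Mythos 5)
Your overall architecture coincides with the paper's: routine verification of the axioms, and, for each infinitary rule, a pointwise preservation claim (``if all premises hold at $(M,s,v)$ then so does the conclusion'') proved by induction on $k$, splitting on whether $X_k$ is $K_j$ or $P_{j,\geq 1}$. You have also correctly isolated the one non-routine step, namely $X_k=P_{j,\geq 1}$ combined with an infinite premise set. The problem is that your proposed way past it does not work and in fact conflicts with the setup you chose. If the preservation claim is pointwise --- and it must be, because the use made of soundness in Theorem \ref{thm compl} is ``satisfiable implies consistent'', where a rule applied inside a derivation from a theory $T$ has premises guaranteed to hold only at the particular states satisfying $T$ --- then you are not entitled to assume the premises hold at \emph{every} state of the model, so the ``global hypothesis'' you invoke is unavailable. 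If instead you weaken the claim to ``premises valid in $M$ imply conclusion valid in $M$'', the induction no longer feeds the pointwise hypothesis you need at the states $t\in\mathcal K_j(s)$ and $u\in S_{j,s}$, and at best you obtain weak soundness, which does not suffice for the right-to-left direction of strong completeness.

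The difficulty is not one of presentation: with only finitely additive measures the pointwise step genuinely fails. Take a single agent $a$, $S=\mathbb N$, $\mathcal K_a(n)=\{n-1\}$ for $n\geq 1$, $\mathcal K_a(0)=\{0\}$, and an atomic sentence $\varphi$ true exactly at the states $n\geq 1$; then $[(E_{\{a\}})^m\varphi]=\{n\in\mathbb N : n\geq m+1\}$ is cofinite for every $m$, while $[C_{\{a\}}\varphi]=\emptyset$. Equipping every state with $S_{a,s}=S$, $\chi_{a,s}=2^{\mathbb N}$ and the finitely additive $0$--$1$ measure induced by a non-principal ultrafilter yields a measurable model in which all premises $P_{a,\geq 1}(E_{\{a\}})^m\varphi$ of an instance of RC hold at a state while the conclusion $P_{a,\geq 1}C_{\{a\}}\varphi$ fails there. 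So no bookkeeping with finite additivity alone can close the inductive step for $X_k=P_{j,\geq 1}$: a countable family of measure-one sets need not have a measure-one intersection, nor need that intersection even be measurable, since $\chi_{j,s}$ is only an algebra. For what it is worth, the paper's own proof does not close this step either: at exactly this point it asserts, without justification, the existence of a single set $U\subseteq S_{j,s}$ with $\mu_{j,s}(U)=1$ on which all countably many premises hold simultaneously --- precisely the countable-intersection property you rightly observe is not available. You have therefore located the genuine weak point of the argument, but neither your repair nor the paper's supplies the missing step.
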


\begin{proof}

	The soundness of the propositional part follows directly from the fact that interpretation of $\wedge$ and $\neg$  in the definition of $\models$ relation  is the same as in the propositional calculus. 
	The proofs for FO1. and FOR. are standard.
	
	\smallskip
	
	

	AE. AC. and APC.\, follow immediately from the semantics of operators $E_G$, $C_G$ and $C_G^r$. 
	
	\smallskip
	
	FO2. Let $(M,s) \models (\forall x) \varphi(x)$. Then $(M,s,v) \models (\forall x) \varphi(x)$ for every valuation $v$. Note that for every   $v$, among all valuations there must be a valuation $v'$ such that $v'(s)(x) = d = I(s)(t)_v$ and $(M,s,v') \models \varphi(x)$.
	From the equivalence  $(M,s,v') \models \varphi(x)$ iff $(M,s,v) \models \varphi(t)$, we obtain that $(M,s,v) \models \varphi(t)$ holds for every valuation. Thus, every instance of FO2 is valid.
	
	\smallskip
	
	FO3. (Barcan formula) Suppose that $(M,s) \models (\forall x) K_i \varphi (x)$, ie. for each evaluation $v$, $(M,s, v) \models (\forall x)  K_i \varphi (x)$. Then for each valuation $v$ and every $d \in D$, $  (M,s,v[d/x])\models K_i \varphi (x)$. Therefore for every $v$ and $d$ and every $t \in \mathcal K_i(s)$, we have $(M,t,v[d/x]) \models \varphi(x)$. Thus, for every $t \in K_i(s)$, and every valuation $v$, $(M,t, v) \models (\forall x) \varphi (x)$. Finally, since for every $t \in K_i(s)$, $(M,t) \models (\forall x)  \varphi (x)$, we have $(M,s) \models K_i (\forall x)   \varphi (x)$.
	
	\smallskip

	

	RC. We will prove by induction on $k$ that if $(M,s, v) \models  \Phi_{k, \boldsymbol{\uptheta}, \mathbf{X}}((E_G)^m  \varphi)$, for all $m\in \mathbb N$, then also $(M,s, v) \models \Phi_{k, \boldsymbol{\uptheta}, \mathbf{X}}(C_G \varphi)$, for each state $s$ and valuation $v$ of any Kripke structure $M$:
	
	Induction base $k=0$. Let $(M,s, v) \models \theta_0 \to (E_G)^m  \varphi$, $ \mbox{ for all } m\in \mathbb{N} $. Assume that it is not $(M,s, v) \models \theta_0 \to  C_G \varphi$, i.e., \begin{equation}\label{sj}
		(M,s, v) \models \theta_0 \wedge \neg C_G \varphi.
	\end{equation}
	Then $(M,s, v) \models (E_G)^m  \varphi$, $ \mbox{ for all } m\in \mathbb{N} $, and therefore $(M,s, v) \models C_G \varphi$ (by the definition of the satisfiability relation), which contradicts (\ref{sj}).


	Inductive step. Let  $(M,s, v) \models  \Phi_{k+1, \boldsymbol{\uptheta}, \mathbf{X}}((E_G)^m  \varphi)$, $\mbox{ for all } m\in \mathbb{N} $.
	
	Suppose $X_{k+1}=K_i$ for some $i \in \mathcal A$ ie. $(M,s, v) \models  \theta_{k+1} \to K_i\Phi_{k, \boldsymbol{\uptheta}, \mathbf{X}}((E_G)^m  \varphi), \mbox{ for all } m\in \mathbb{N}$. 
	Assume the opposite, that $(M,s, v) \not \models  \Phi_{k+1, \boldsymbol{\uptheta}, \mathbf{X}}(C_G \varphi)$, ie. $(M,s, v) \models  \theta_{k+1} \wedge \neg K_i\Phi_{k, \boldsymbol{\uptheta}, \mathbf{X}}(C_G \varphi)$. Then also $(M,s, v) \models  K_i\Phi_{k, \boldsymbol{\uptheta}, \mathbf{X}}((E_G)^m  \varphi), \mbox{ for all } m\in \mathbb{N}$, so for every state $t \in \mathcal K_i(s)$ we have that $(M,t, v) \models  \Phi_{k, \boldsymbol{\uptheta}, \mathbf{X}}((E_G)^m  \varphi), \mbox{ for all } m\in \mathbb{N}$, and by the induction hypothesis $(M,t, v) \models  \Phi_{k, \boldsymbol{\uptheta}, \mathbf{X}}(C_G \varphi)$. Therefore  $(M,s, v) \models K_i  \Phi_{k, \boldsymbol{\uptheta}, \mathbf{X}}(C_G \varphi)$, leading to a contradiction.

	On the other hand, let $X_{k+1}=P_{i,\geq 1}, i\in \mathcal A$ i.e.  $(M,s, v) \models \theta_{k+1} \to P_{i,\geq 1} \Phi_{k, \boldsymbol{\uptheta}, \mathbf{X}}((E_G)^m  \varphi),\mbox{ for all } m\in \mathbb{N}$. Otherwise, if $(M,s, v) \not \models \Phi_{k+1, \boldsymbol{\uptheta}, \mathbf{X}}(C_G \varphi)$, then $(M,s, v) \models \theta_{k+1} \wedge \neg P_{i,\geq 1}\Phi_{k, \boldsymbol{\uptheta}, \mathbf{X}}(C_G \varphi)$, so $(M,s, v) \models P_{i,\geq 1} \Phi_{k, \boldsymbol{\uptheta}, \mathbf{X}}((E_G)^m  \varphi)$ for every $m \in \mathbb N$, $m \geq \frac{1}{r}$. This implies there is a subset $U \subseteq S_{i,s}$ such that $\mu_{i,s}(U)=1$ and for all $u \in U$, $m \in \mathbb N$, $m \geq \frac{1}{r}$:  $(M,u, v) \models  \Phi_{k, \boldsymbol{\uptheta}, \mathbf{X}}((E_G)^m  \varphi )$. Then $(M,u, v) \models  \Phi_{k, \boldsymbol{\uptheta}, \mathbf{X}}(C_G \varphi)$ for all $u \in U$ by the induction hypothesis, so $(M,s, v) \models P_{i,\geq 1} \Phi_{k, \boldsymbol{\uptheta}, \mathbf{X}}(C_G \varphi )$, which is a contradiction. 
	
	\smallskip
	
	RA. We prove the soundness of this rule by induction on $k$, ie. if $(M,s, v) \models \Phi_{k, \boldsymbol{\uptheta}, \mathbf{X}}(P_{i,\geq r - \frac{1}{m}}{\varphi} )$  for every $m \in \mathbb N$, $m \geq \frac{1}{r}$ and $r>0$, given some model $M$, state $s$ and valuation $v$, then $(M,s, v) \models \Phi_{k, \boldsymbol{\uptheta}, \mathbf{X}}(P_{i,\geq r}{\varphi})$.
	
	Induction base $k=0$. This case follows by the properties of the real numbers.
	
	Inductive step. Let $(M,s, v) \models \Phi_{k+1, \boldsymbol{\uptheta}, \mathbf{X}}(P_{i,\geq r - \frac{1}{m}}{\varphi})$ and $X_{k+1}=P_{i,\geq 1}, i\in \mathcal A$ ie. $(M,s, v) \models \theta_{k+1} \to P_{i,\geq 1} \Phi_{k, \boldsymbol{\uptheta}, \mathbf{X}}(P_{i,\geq r - \frac{1}{m}}{\varphi})$  for every $m \in \mathbb N$, $m \geq \frac{1}{r}$. Assume the opposite, that $(M,s, v) \not \models \Phi_{k+1, \boldsymbol{\uptheta}, \mathbf{X}}(P_{i,\geq r}{\varphi})$. Then $(M,s, v) \models \theta_{k+1} \wedge \neg P_{i,\geq 1}\Phi_{k, \boldsymbol{\uptheta}, \mathbf{X}}(P_{i,\geq r}{\varphi})$, so $(M,s, v) \models P_{i,\geq 1} \Phi_{k, \boldsymbol{\uptheta}, \mathbf{X}}(P_{i,\geq r - \frac{1}{m}}{\varphi} )$ for every $m \in \mathbb N$, $m \geq \frac{1}{r}$. Therefore, there exists a subset $U \subseteq S_{i,s}$ such that $\mu_{i,s}(U)=1$ and for all $u \in U$, $m \in \mathbb N$, $m \geq \frac{1}{r}$:  $(M,u, v) \models  \Phi_{k, \boldsymbol{\uptheta}, \mathbf{X}}(P_{i,\geq r - \frac{1}{m}}{\varphi} )$. Then, by the induction hypothesis, we have $(M,u, v) \models  \Phi_{k, \boldsymbol{\uptheta}, \mathbf{X}}(P_{i,\geq r})$ for all $u \in U$, so $(M,s, v) \models P_{i,\geq 1} \Phi_{k, \boldsymbol{\uptheta}, \mathbf{X}}(P_{i,\geq r}{\varphi} )$, which is a contradiction. The proof follows similarly as for RC if $X_{k+1}=K_i, i \in \mathcal A$.

	\smallskip

	
	RPC. Now we show that rule RPC preserves validity by induction on $k$. 
	
	Let us prove the implication: if $(M,s,v) \models \Phi_{k, \boldsymbol{\uptheta}, \mathbf{X}}((F_G^r)^m \varphi)$, for all $m\in \mathbb N$, and $PCK^{fo}_\infty$-models $M$, then also $(M,s,v) \models \Phi_{k, \boldsymbol{\uptheta}, \mathbf{X}}(C_G^r \varphi)$, for each state $s$ in $M$:
	
	Induction base $k=0$. Suppose $(M,s,v) \models \theta_0 \to (F_G^r)^m \varphi$ for all $m\in \mathbb N$. If it is not $(M,s,v) \models \theta_0 \to  C_G^r \varphi$, i.e. $(M,s,v) \models \theta_0 \wedge \neg C_G^r \varphi$, then $(M,s,v) \models (F_G^r)^m \varphi$, for all $m \in \mathbb N$. Therefore $(M,s,v) \models C_G^r \varphi$, which is a contradiction.
	
	Inductive step. Let  $(M,s,v) \models  \Phi_{k+1, \boldsymbol{\uptheta}, \mathbf{X}}((F_G^r)^m \varphi), \mbox{ for all }  m\in \mathbb{N}\ $.

	Suppose $X_{k+1}=K_{i}, i\in \mathcal A$ i.e. $(M,s,v) \models \theta_{k+1} \to K_i\Phi_{k, \boldsymbol{\uptheta}, \mathbf{X}}((F_G^r)^m \varphi), \mbox{ for all }  m\in \mathbb N$. If $s \not \models  \Phi_{k+1, \boldsymbol{\uptheta}, \mathbf{X}}(C_G^r \varphi)$, i.e. $(M,s,v) \models  \theta_{k+1} \wedge \neg K_i\Phi_{k, \boldsymbol{\uptheta}, \mathbf{X}}(C_G^r \varphi)$  (*), then $(M,s,v) \models  K_i\Phi_{k, \boldsymbol{\uptheta}, \mathbf{X}}((F_G^r)^m \varphi)$, for all  $m\in \mathbb N$. So for each $t \in \mathcal K_i(s)$ we have $(M,t,v) \models \Phi_{k, \boldsymbol{\uptheta}, \mathbf{X}}((F_G^r)^m \varphi)$. By the induction hypothesis on $k$ it follows that $(M,t,v) \models \Phi_{k, \boldsymbol{\uptheta}, \mathbf{X}}(C_G^r \varphi)$. But then $(M,s,v) \models K_i  \Phi_{k, \boldsymbol{\uptheta}, \mathbf{X}}(C_G^r \varphi)$ which contradicts (*).
	
	Let $X_{k+1}=P_{i,\geq 1}, i\in \mathcal A$ i.e.  $(M,s, v) \models \theta_{k+1} \to P_{i,\geq 1} \Phi_{k, \boldsymbol{\uptheta}, \mathbf{X}}((F_G^r)^m \varphi),\mbox{ for all } m\in \mathbb{N}$. Otherwise, if $(M,s, v) \not \models \Phi_{k+1, \boldsymbol{\uptheta}, \mathbf{X}}(C_G \varphi)$, then $(M,s, v) \models \theta_{k+1} \wedge \neg P_{i,\geq 1}\Phi_{k, \boldsymbol{\uptheta}, \mathbf{X}}(C_G^r\varphi)$, so $(M,s, v) \models P_{i,\geq 1} \Phi_{k, \boldsymbol{\uptheta}, \mathbf{X}}((F_G^r)^m \varphi)$ for every $m \in \mathbb N$, $m \geq \frac{1}{r}$. Therefore, there is a subset $U \subseteq S_{i,s}$ such that $\mu_{i,s}(U)=1$ and for all $u \in U$, $m \in \mathbb N$, $m \geq \frac{1}{r}$:  $(M,u, v) \models  \Phi_{k, \boldsymbol{\uptheta}, \mathbf{X}}((F_G^r)^m  \varphi )$. Then $(M,u, v) \models  \Phi_{k, \boldsymbol{\uptheta}, \mathbf{X}}(C_G^r \varphi)$ for all $u \in U$ by the induction hypothesis, so $(M,s, v) \models P_{i,\geq 1} \Phi_{k, \boldsymbol{\uptheta}, \mathbf{X}}(C_G^r \varphi )$ which is a contradiction. 
\end{proof}

\section{Some theorems of $PCK^{fo}$}

In this section we prove several theorems. 
Some of them will
be useful  in proving the completeness of the axiomatization $Ax_{PCK^{fo}}$. 
We start with the deduction theorem. Since we will frequently use this theorem, we will not always  explicitly mention it in the proofs.

\begin{theorem}[Deduction theorem]\label{deduc}  If $T$ is a theory and $\varphi, \psi$ are sentences, then $T \cup \{ \varphi \} \vdash \psi$ implies $T \vdash \varphi \to \psi$.
\end{theorem}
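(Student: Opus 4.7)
The plan is to proceed by transfinite induction on the length $\lambda+1$ of a proof of $\psi$ from $T\cup\{\varphi\}$, and to argue that at each step of that proof we can construct a derivation of $\varphi\to\psi$ from $T$. The base case covers three situations: $\psi=\varphi$, in which case $T\vdash\varphi\to\varphi$ by Prop; $\psi\in T$, where Prop yields $T\vdash\varphi\to\psi$; and $\psi$ an instance of an axiom schema, handled identically.

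For the inductive step I distinguish cases by the last rule applied. The MP case is routine propositional manipulation. For FOR, $\psi=\forall x\,\chi$ is obtained from $\chi$; since $\varphi$ is a sentence, $x$ is not free in $\varphi$, so by the induction hypothesis $T\vdash\varphi\to\chi$, then FOR gives $T\vdash\forall x(\varphi\to\chi)$, and FO1 yields $T\vdash\varphi\to\forall x\,\chi$. For the necessitation rules RK and RP, the restriction that their premises be theorems means the conclusion is already a theorem; Prop then gives $T\vdash\varphi\to\psi$ immediately.

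The main obstacle is the five infinitary rules RE, RC, RA, RPE, RPC, and this is precisely where the $k$-nested implication machinery of Definition~\ref{def k nested} earns its keep. Consider a generic case: suppose $\psi=\Phi_{k,\boldsymbol{\uptheta},\mathbf{X}}(C_G\chi)$ is obtained by RC from premises $\Phi_{k,\boldsymbol{\uptheta},\mathbf{X}}((E_G)^m\chi)$, $m\in\mathbb{N}$. By the induction hypothesis, $T\vdash\varphi\to\Phi_{k,\boldsymbol{\uptheta},\mathbf{X}}((E_G)^m\chi)$ for every $m$. Unfolding the outermost layer, each such formula has the shape $\varphi\to(\theta_k\to X_k\,\Phi_{k-1}(\cdots))$, which is propositionally equivalent to $(\varphi\wedge\theta_k)\to X_k\,\Phi_{k-1}(\cdots)$. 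Setting $\boldsymbol{\uptheta}'=(\theta_0,\ldots,\theta_{k-1},\varphi\wedge\theta_k)$, this is exactly $\Phi_{k,\boldsymbol{\uptheta}',\mathbf{X}}((E_G)^m\chi)$. Applying RC with the modified sequence $\boldsymbol{\uptheta}'$ yields $T\vdash\Phi_{k,\boldsymbol{\uptheta}',\mathbf{X}}(C_G\chi)$, and reversing the propositional rewriting recovers $T\vdash\varphi\to\Phi_{k,\boldsymbol{\uptheta},\mathbf{X}}(C_G\chi)$. The same recipe, with $C_G$ and $E_G$ replaced by the appropriate operator pair, handles RE (pair $E_G,K_i$), RPC (pair $C_G^r, F_G^r$), RPE (pair $E_G^r,K_i^r$), and RA (the Archimedean rule with conclusion $P_{i,\geq r}\chi$ and premises $P_{i,\geq r-1/m}\chi$). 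The case $k=0$ must be inspected separately but reduces to the same rewriting with $\theta_0'=\varphi\wedge\theta_0$ (or simply $\theta_0'=\varphi$ if $\theta_0=\top$), so no rule step takes us outside the form to which the infinitary rule applies. The crucial point is that absorbing the extra antecedent $\varphi$ into the outermost $\theta_k$ keeps us within the syntactic template of a $k$-nested implication, which is exactly why the rules are stated in that form rather than in their simpler $k=0$ versions.
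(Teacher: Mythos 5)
Your proposal is correct and follows essentially the same route as the paper: transfinite induction on proof length, with the theorem-premise rules (RK, RP) dispatched trivially and the infinitary rules handled by absorbing the antecedent $\varphi$ into the outermost $\theta_k$ to form the modified sequence $(\theta_0,\dots,\theta_{k-1},\varphi\wedge\theta_k)$, reapplying the rule, and reversing the propositional rewriting. The only difference is that you spell out the MP and FOR cases (the latter via FO1, using that $\varphi$ is a sentence), which the paper leaves implicit.
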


\begin{proof}
	
	We use the transfinite induction on the length of the proof of $\psi$ from $T \cup \{\varphi\}$. The case $\psi = \varphi$ is obvious; if $\psi$ is an axiom, then $\vdash \psi$, so $T \vdash \psi$, and therefore $T \vdash \varphi \to \psi$. If $\psi$ was obtained by rule RK, ie. $\psi = K_i \varphi$ where $\varphi$ is a theorem, then $\vdash K_i \varphi$ (by R2), that is, $\vdash \psi$, so $T \vdash \varphi \to \psi$. The reasoning is analogous for cases of other inference rules that require a theorem as a premise. Now we consider the case where $\psi$ was obtained by rule RPC. The proof for the other infinitary rules is similar.\vspace{2pt}
	
	\smallskip
	
	\smallskip
	
	Let $T, \varphi \vdash \{ \Phi_{k, \boldsymbol{\uptheta}, \mathbf{X}}((F_G^r)^m \eta)\, | \,  m \in \mathbb N\} \vdash \psi$ where  $\psi =\Phi_{k, \boldsymbol{\uptheta}, \mathbf{X}}(C_G^r \eta), k \geq 1$. Then 
	
	$T \vdash \varphi \to  \Phi_{k, \boldsymbol{\uptheta}, \mathbf{X}}((F_G^r)^m \eta)$, $\mbox{ for all }  m\in \mathbb N  $, by the induction hypothesis.\vspace{2pt}
	
	Suppose $X_{k}=K_i$, for some $i \in \mathcal A$ . \vspace{2pt}
	
	$T \vdash \varphi \to   ( \theta_k \to   K_i  \Phi_{k-1, \boldsymbol{\uptheta}, \mathbf{X}}((F_G^r)^m \eta))$, by the definition of $\Phi_{k}$\vspace{2pt}

	$T \vdash  (\varphi \wedge \theta_k) \to    K_i  \Phi_{k-1, \boldsymbol{\uptheta}, \mathbf{X}}((F_G^r)^m \eta)$, by the propositional tautology
	
	\ \ \ \ \ \ $(p \to (q \to r)) \longleftrightarrow ((p \wedge q) \to r)$.\vspace{2pt}
	
	Let  $\overline{\boldsymbol{\uptheta}} = (\theta_0, \dots, \theta_{k-1}, \varphi \wedge \theta_k)$. Then we have:\vspace{2pt}
	
	$T \vdash \overline{ \theta_k} \to     K_i\Phi_{k-1, \boldsymbol{\uptheta}, \mathbf{X}}((F_G^r)^m \eta),  \mbox{ for all }  m\in \mathbb N $\vspace{2pt}
	
	$T \vdash  \Phi_{k, \boldsymbol{\uptheta}, \mathbf{X}}((F_G^r)^m \eta), \mbox{ for all }  m\in \mathbb N  $\vspace{2pt}
	
	$T \vdash  \Phi_{k, \boldsymbol{\uptheta}, \mathbf{X}}(C_G^r \eta) $ by RPC\vspace{2pt}
	
	$T \vdash  (\varphi \wedge \theta_k) \to   K_i  \Phi_{k-1, \boldsymbol{\uptheta}, \mathbf{X}}(C_G^r\eta) $\vspace{2pt}
	
	$T \vdash \varphi \to   ( \theta_k \to   K_i  \Phi_{k-1, \boldsymbol{\uptheta}, \mathbf{X}}(C_G^r\eta) $\vspace{2pt}
	
	$T \vdash \varphi \to  \Phi_{k, \boldsymbol{\uptheta}, \mathbf{X}}(C_G^r \eta) $ \vspace{2pt}
	
	$T \vdash \varphi \to  \psi $. 
	
	\smallskip
	
	\smallskip
	
	The case when $X_{k}=P_{i,\geq 1}$, for some $ i\in \mathcal A$ can be proved in the same way, by replacing $K_i$ with $P_{i,\geq 1}$. The case $k=0$ also follows in a similar way.
\end{proof}

Next we prove several results about purely epistemic part of our logic. First we show  that the strong variant of necessitation for knowledge operator is a consequence of the axiomatization $Ax_{PCK^{fo}}$. This theorem will have an important role in the proof of completeness theorem, in the construction of the canonical model.

First we need to introduce some notation. For a given set of formulas $T$ and $ i \in \mathcal A$, we define the set $K_i T$ as the set of all formulas $K_i\varphi$, where $\varphi$ belongs to $T$, i.e.
$$K_iT=\{K_i\varphi \, | \, \varphi \in T \}.$$

\begin{theorem}[Strong necessitation]\label{thm strong necessitation}If $T$ is a theory and $T \vdash \varphi$, then $K_iT \vdash K_i  \varphi$, for all $i \in \mathcal A.$
\end{theorem}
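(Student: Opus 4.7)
The plan is transfinite induction on the length of the proof of $\varphi$ from $T$, mimicking the standard argument but with Barcan's axiom handling the quantifier step and the structure of the $k$-nested implications handling the infinitary rules.

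For the base cases, if $\varphi$ is an axiom then $\vdash \varphi$, so RK gives $\vdash K_i\varphi$ and hence $K_iT \vdash K_i\varphi$; if $\varphi \in T$ then $K_i\varphi \in K_iT$ directly. For MP from $\psi$ and $\psi \to \varphi$, the induction hypothesis yields $K_iT \vdash K_i\psi$ and $K_iT \vdash K_i(\psi \to \varphi)$, and AK together with MP delivers $K_iT \vdash K_i\varphi$. For FOR deriving $\varphi = \forall x\, \chi$ from $\chi$, the induction hypothesis gives $K_iT \vdash K_i\chi$; applying FOR once more (permissible since $x$ is not free in $K_iT$, as $T$ consists of sentences) yields $K_iT \vdash \forall x\, K_i\chi$, and Barcan's axiom FO3 then produces $K_iT \vdash K_i\forall x\, \chi$. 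The necessitation rules RK and RP are applied only to theorems, so in those cases $\vdash \varphi$ and RK applies directly.

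The main obstacle is the family of infinitary rules RE, RC, RPE, RPC, RA, whose premises are themselves $k$-nested implications. The unifying observation is that for any such implication, $K_i\Phi_{k,\boldsymbol{\uptheta},\mathbf{X}}(\tau)$ is propositionally equivalent to $\top \to K_i\Phi_{k,\boldsymbol{\uptheta},\mathbf{X}}(\tau)$, which is exactly $\Phi_{k+1,\boldsymbol{\uptheta}',\mathbf{X}'}(\tau)$ with $\boldsymbol{\uptheta}' = (\theta_0,\dots,\theta_k,\top)$ and $\mathbf{X}' = (X_1,\dots,X_k,K_i)$; since $K_i$ is among the admissible operators, the extended sequence $\mathbf{X}'$ is still admissible. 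To illustrate with RC, suppose $\varphi = \Phi_{k,\boldsymbol{\uptheta},\mathbf{X}}(C_G\eta)$ was obtained from $\{\Phi_{k,\boldsymbol{\uptheta},\mathbf{X}}((E_G)^m\eta) \mid m \in \mathbb N\}$. The induction hypothesis gives $K_iT \vdash K_i\Phi_{k,\boldsymbol{\uptheta},\mathbf{X}}((E_G)^m\eta)$ for every $m$, equivalently $K_iT \vdash \Phi_{k+1,\boldsymbol{\uptheta}',\mathbf{X}'}((E_G)^m\eta)$. Applying RC at level $k+1$ yields $K_iT \vdash \Phi_{k+1,\boldsymbol{\uptheta}',\mathbf{X}'}(C_G\eta)$, which is propositionally equivalent to $K_i\varphi$.

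The remaining infinitary cases RE, RPE, RPC, and RA follow by exactly the same prepending trick, substituting the appropriate operators and premise families (for instance, $\{K_j\eta \mid j \in G\}$ and conclusion $E_G\eta$ for RE, or $\{P_{i,\geq r - 1/m}\eta\}_{m\geq 1/r}$ and conclusion $P_{i,\geq r}\eta$ for RA) in place of the $C_G$/$(E_G)^m$ pair above. This exhausts every inference step and closes the induction.
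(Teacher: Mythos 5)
Your proposal is correct and follows essentially the same route as the paper's proof: transfinite induction on the length of the derivation, with the Barcan formula handling the FOR case and the key device of rewriting $K_i\Phi_{k,\boldsymbol{\uptheta},\mathbf{X}}(\tau)$ as $\top \to K_i\Phi_{k,\boldsymbol{\uptheta},\mathbf{X}}(\tau) = \Phi_{k+1,(\boldsymbol{\uptheta},\top),(\mathbf{X},K_i)}(\tau)$ so that the infinitary rule can be reapplied at level $k+1$. The only difference is presentational (you spell out the base cases and MP, and illustrate with RC rather than RPC), and your remark that FOR is applicable because $T$ consists of sentences is a correct, if tacit, detail of the paper's argument.
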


\begin{proof}
	Let $T \vdash \varphi$.
	We will prove $K_iT \vdash K_i  \varphi$ using the transfinite induction on the length of proof of $T \vdash \varphi$.
	Here we will only consider the application of rules FOR and RPC, while the cases when we apply the other infinitary rules are similar as  the proof for RCP.
	
	\smallskip

	1) Suppose that $T \vdash  \varphi$, where $\varphi = (\forall x) \psi$, was obtained from $T\vdash \psi$ by the inference rule FOR. 
	
	Then:
	
	\smallskip
	
	$ T\vdash \psi $ by the assumption
	
	$K_i T\vdash K_i \psi$ by the induction hypothesis

	$K_i T\vdash  (\forall x) K_i \psi$ by FOR
	
	$K_i T\vdash   K_i (\forall x) \psi$ by Barcan formula
	
	\smallskip
	
	2) Suppose that  $T \vdash  \varphi$ where  $\varphi = \Phi_{k, \boldsymbol{\uptheta}, \mathbf{X}}(C_G^r \psi)$ was derived by application of RPC. Then:
	
	\smallskip
	
	
	$T \vdash\Phi_{k, \boldsymbol{\uptheta}, \mathbf{X}}((F_G^r)^m\psi )$,  for all $ m \in \mathbb N$
	
	$K_iT \vdash K_i\Phi_{k, \boldsymbol{\uptheta}, \mathbf{X}}((F_G^r)^m\psi )$, for all $ m \in \mathbb N$,   by induction hypothesis
	
	$K_iT \vdash \top \to K_i\Phi_{k, \boldsymbol{\uptheta}, \mathbf{X}}((F_G^r)^m\psi )$,  for all $ m \in \mathbb N$ 
	
	$K_iT \vdash \Phi_{k+1, \overline{\boldsymbol{\uptheta}}, \overline{\mathbf{X}} }((F_G^r)^m\psi)$, where $\overline{\boldsymbol{\uptheta}} = (\boldsymbol{\uptheta}, \top) $   and $ \overline{\mathbf{X}} = (\mathbf{X}, K_i)$. 
	
	$K_iT \vdash \Phi_{k+1, \overline{\boldsymbol{\uptheta}}, \overline{\mathbf{X}} }(C_G^r \psi)$, by RPC
	
	$K_iT \vdash \top \to K_i  \Phi_{k, \boldsymbol{\uptheta}, \mathbf{X}}(C_G^r \psi)$
	
	$K_iT \vdash \top \to K_i  \varphi$
	
	$K_iT \vdash K_i  \varphi$.  
\end{proof}

As a consequence, we also obtain strong necessitation for the operators of group knowledge. As we will see later, this result is necessary to prove so-called fixed-point axiom for common knowledge operator.

\begin{corollary}\label{cor necessitation}
	If $T$  \label{necess} is a theory and $T \vdash \varphi$, then $E_G T\vdash E_G \varphi$, for all $ G \subseteq \mathcal A$.
\end{corollary}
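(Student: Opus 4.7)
The plan is to chain together Theorem \ref{thm strong necessitation} (strong necessitation for $K_i$), the axiom AE, and the infinitary rule RE instantiated in its simplest form. So the argument has four steps, all routine once they are laid out in the right order.

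First, assume $T \vdash \varphi$. By Theorem \ref{thm strong necessitation} applied for each $i \in G$, we obtain $K_i T \vdash K_i \varphi$.

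Second, I would show $E_G T \vdash K_i T$ for every $i \in G$, in the sense that every $K_i\psi$ with $\psi \in T$ is derivable from $E_G T$. This is immediate from the axiom AE: for every $\psi \in T$ and every $i \in G$, the premise $E_G \psi \in E_G T$ together with $E_G \psi \to K_i \psi$ yields $K_i \psi$ by Modus Ponens. Concatenating this with the proof produced in the first step (via the monotonicity of $\vdash$) gives $E_G T \vdash K_i \varphi$ for every $i \in G$.

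Third, I would apply rule RE in the degenerate form obtained by taking $k = 0$ and $\theta_0 = \top$, which, as noted just after the statement of the axiomatization, reduces to
\[
\frac{\{\, K_i \varphi \mid i \in G\,\}}{E_G \varphi}.
\]
Applying this rule on top of the derivations from the previous step yields $E_G T \vdash E_G \varphi$, which is the required conclusion.

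There is no serious obstacle in this argument; the one subtlety worth stating explicitly is that when invoking RE inside a derivation from the assumption set $E_G T$, we are using the general definition of $\vdash$ (which permits infinitary rules to be applied to premises derived from the theory), rather than the restricted form that forces RK- and RP-premises to be theorems. Since RE is not in that restricted list, its use on premises derived from $E_G T$ is legitimate, so the argument goes through unchanged.
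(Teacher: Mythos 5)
Your proof is correct and is essentially identical to the paper's own argument: both use the axiom AE to obtain $E_G T \vdash K_i T$, Theorem \ref{thm strong necessitation} to get $K_i T \vdash K_i\varphi$, and then the rule RE instantiated with $k=0$ and $\theta_0=\top$ to conclude $E_G T \vdash E_G\varphi$. Your closing remark about RE being applicable to premises derived from the theory (unlike RK and RP) is a correct and worthwhile observation, though the paper leaves it implicit.
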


\begin{proof}
	Let $T \vdash \varphi$. For every $i \in G$, we have  $E_GT \vdash K_i T$ by the axiom AE, and $K_iT \vdash K_i \varphi$, by Theorem \ref{thm strong necessitation}. Since by the rule RE,  where we choose $k=0$ and $\theta_0=\top$, we have
	$$
	\{  K_i \varphi \ | \ i\in G \}\vdash E_G \varphi,
	$$
	we obtain $E_GT \vdash E_G \varphi$.
\end{proof}


Now we show that some standard properties of epistemic operators can be proved in $Ax_{PCK^{fo}}$.

\begin{proposition}\label{prop epistemic}
	Let $\varphi$, $\psi$, $\varphi_j$, $j=1,\ldots,m$ be formulas, $i\in \mathcal{A}$ and $G \in \mathcal G$. Then:
	
	\begin{enumerate}
		
		\item \label{KD} $\vdash K_i(\varphi \to \psi) \to (K_i\varphi \to K_i\psi)$
		
		
		\item \label{EMD} $\vdash E_G(\varphi \to \psi) \to (E_G\varphi \to E_G\psi)$

		\item \label{CD} $\vdash C_G(\varphi \to \psi) \to (C_G\varphi \to C_G\psi)$


		
		
		
		
		\item \label{conjunction k}  $\vdash K_i(\bigwedge\limits_{j=1}^m \varphi_j)\equiv\bigwedge\limits_{j=1}^m K_i\varphi_j,  \forall i \in G$,  
		
		
		\item \label{conjunction e} $\vdash E_G(\bigwedge\limits_{j=1}^m \varphi_j)\equiv\bigwedge\limits_{j=1}^m E_G\varphi_j$
		
		\item \label{axiom fixed point} $\vdash C_G\varphi \to E_G(\varphi \wedge C_G \varphi) $ 
		

	\end{enumerate}
\end{proposition}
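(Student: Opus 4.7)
My plan is to establish the six items in the order 1, 4, 2, 5, 3, 6, so that each step may invoke the previous ones. Item 1 is just the propositional reformulation of AK via the tautology $(p\wedge q\to r)\leftrightarrow(q\to(p\to r))$. For item 4, the $\to$ direction follows by applying RK to each tautology $\bigwedge_{j=1}^m\varphi_j\to\varphi_k$ and then using item 1; the $\leftarrow$ direction starts from the tautology $\varphi_1\to(\varphi_2\to\cdots\to(\varphi_m\to\bigwedge_j\varphi_j)\cdots)$, applies RK, and then peels off $K_i$ from each nested implication by iterated use of item 1. For item 2, I combine AE (which gives $E_G\chi\to K_i\chi$ for every $i\in G$) with item 1 to obtain $\vdash E_G(\varphi\to\psi)\wedge E_G\varphi\to K_i\psi$ for each $i\in G$; a single application of RE with $k=0$ and $\theta_0=E_G(\varphi\to\psi)\wedge E_G\varphi$ then yields the conclusion with $E_G\psi$ on the right. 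Item 5 is analogous in both directions, combining AE, item 4, and RE with $k=0$.

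For item 3, I induct on $m\ge 1$ to prove $\vdash (E_G)^m(\varphi\to\psi)\wedge(E_G)^m\varphi\to(E_G)^m\psi$. The base $m=1$ is item 2; for the inductive step I apply Corollary \ref{cor necessitation} (strong necessitation for $E_G$) to the inductive hypothesis, obtaining $\vdash E_G\bigl((E_G)^m(\varphi\to\psi)\wedge(E_G)^m\varphi\bigr)\to (E_G)^{m+1}\psi$, and then use item 5 to rewrite the antecedent as $(E_G)^{m+1}(\varphi\to\psi)\wedge(E_G)^{m+1}\varphi$. Axiom AC now converts the two antecedents of item 3 into $(E_G)^m$-form, giving $\vdash C_G(\varphi\to\psi)\wedge C_G\varphi\to(E_G)^m\psi$ for every $m\ge 1$, and a single application of RC with $k=0$ and $\theta_0=C_G(\varphi\to\psi)\wedge C_G\varphi$ concludes $C_G\psi$.

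The main obstacle is item 6, because the axiomatization does not contain the $T$-axiom $K_i\varphi\to\varphi$, so strong necessitation applied to the trivial derivation $\{C_G\varphi\}\vdash C_G\varphi$ does not lead anywhere useful. I instead invoke RC in its nested form to prove $\vdash C_G\varphi\to K_i C_G\varphi$ for each $i\in G$: take $k=1$, $\theta_0=\top$, $\theta_1=C_G\varphi$, $X_1=K_i$. The required premises $C_G\varphi\to K_i(\top\to (E_G)^m\varphi)$ are, via RK on the tautology $\chi\leftrightarrow(\top\to\chi)$ together with item 1, equivalent to $C_G\varphi\to K_i((E_G)^m\varphi)$; these hold for every $m\ge 1$ because AC gives $C_G\varphi\to(E_G)^{m+1}\varphi=E_G((E_G)^m\varphi)$ and AE then yields $K_i((E_G)^m\varphi)$. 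An application of RE with $k=0$ and $\theta_0=C_G\varphi$ upgrades the formulas $C_G\varphi\to K_i C_G\varphi$ (one for each $i\in G$) to $\vdash C_G\varphi\to E_G C_G\varphi$. Combined with $\vdash C_G\varphi\to E_G\varphi$ (AC with $m=1$) and item 5, this gives $\vdash C_G\varphi\to E_G(\varphi\wedge C_G\varphi)$, completing the proposition.
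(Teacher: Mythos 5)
Your proof is correct, and while its overall skeleton matches the paper's --- AK for (1), AE plus RE for (2) and (5), induction on the iteration depth followed by AC and RC for (3), and the reduction of (6) to $\vdash C_G\varphi\to E_G C_G\varphi$ --- you reach the two harder items by a genuinely different mechanism. For (6) the paper applies Corollary \ref{cor necessitation} (strong $E_G$-necessitation) to the derivation $\{(E_G)^m\varphi \mid m\in\mathbb N\}\vdash C_G\varphi$ given by RC, obtaining $\{(E_G)^{m+1}\varphi\mid m\in\mathbb N\}\vdash E_G C_G\varphi$ and hence $C_G\varphi\vdash E_G C_G\varphi$ via AC; you instead instantiate RC in its nested form ($k=1$, $X_1=K_i$, $\theta_1=C_G\varphi$, $\theta_0=\top$) to derive $\vdash C_G\varphi\to K_i C_G\varphi$ directly and then lift to $E_G$ with RE. Your route exercises the $k$-nested structure of the rules explicitly, whereas the paper hides that structure inside the already-proved strong-necessitation results (whose own proofs rest on exactly this nesting); both arguments are sound and of comparable length, and your premise set for RC is legitimately discharged since $C_G\varphi\to K_i((E_G)^m\varphi)$ follows from AC at level $m+1$ together with AE. Similarly, in the inductive step of (3) the paper re-derives distribution at level $m+1$ from RK, RE and two uses of the base case, while you get it in one stroke from Corollary \ref{cor necessitation} plus item (5) --- slightly slicker, at the cost of importing that corollary, which is available at this point so the dependency is fine. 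A cosmetic difference throughout: where the paper derives from hypotheses and then invokes the Deduction theorem, you bake the antecedent into $\theta_0$ of RE/RC, avoiding the Deduction theorem in those steps.
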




\begin{proof}\hfill
	
	
	\begin{enumerate}
		
		\item follows directly from AK. 

		\item We use the following derivation:
		\begin{align*}
			E_G \varphi \wedge E_G(\varphi \to \psi) &\vdash \{ K_i \varphi \wedge K_i(\varphi \to \psi) \, | \, \forall i \in G\} \mbox{ (by AE)} \\
			&\vdash \{ K_i \psi \, | \, \forall i \in G\} \mbox{ (by AK)}\\
			&\vdash E_G \psi \mbox{ (by RE)}
		\end{align*}
		Therefore, by \hyperref[deduc]{Deduction theorem} $\vdash E_G \varphi \wedge E_G(\varphi \to \psi) \to  E_G\psi$, i.e.,  $\vdash E_G(\varphi \to \psi) \to (E_G\varphi \to E_G\psi)$.

		\item Let us first prove, using the induction on $n$, that
		\begin{equation}\label{n1}	
			\vdash (E_G)^m  (\varphi \to \psi) \to ((E_G)^m  \varphi \to (E_G)^m  \psi)
		\end{equation}
		holds for every $m \in \mathbb{N}$.
		
		Induction base is proved in the previous part of this proposition (\ref{EMD}).
		
		Induction step:
		
		$\vdash (E_G)^m  (\varphi \to \psi) \to ((E_G)^m  \varphi \to (E_G)^m  \psi)$, induction hypothesis
		
		$\vdash K_i ( (E_G)^m  (\varphi \to \psi) \to ((E_G)^m  \varphi \to (E_G)^m  \psi)),  \forall i \in G$, by RK 
		
		$\vdash E_G ( (E_G)^m  (\varphi \to \psi) \to ((E_G)^m  \varphi \to (E_G)^m  \psi))$, by RE
		
		$\vdash E_G ( (E_G)^m  (\varphi \to \psi) \to ((E_G)^m  \varphi \to (E_G)^m  \psi))    \to (     E_G^{m+1}(\varphi \to \psi)        \to          E_G((E_G)^m  \varphi \to (E_G)^m  \psi)               )$, by induction base
		
		$\vdash     (E_G)^{m+1}(\varphi \to \psi)        \to          E_G((E_G)^m  \varphi \to (E_G)^m  \psi)                             $, by previous two
		
		$\vdash  E_G((E_G)^m  \varphi \to (E_G)^m  \psi)       \to ((E_G)^{m+1}\varphi \to (E_G)^{m+1}\psi)                      $, by induction base
		
		$\vdash (E_G)^{m+1}(\varphi \to \psi) \to ((E_G)^{m+1}\varphi \to (E_G)^{m+1}\psi)$, by previous two.
		
		Thus, (\ref{n1}) holds. Next,	\begin{align*}
			C_G \varphi \wedge C_G(\varphi \to \psi) &\vdash \{ (E_G)^m  \varphi \wedge (E_G)^m (\varphi \to \psi) \, | \, \forall m \in \mathbb{N}\} \mbox{ (by AC)} \\
			&\vdash \{ (E_G)^m  \psi \, | \, \forall m \in \mathbb{N}\} \mbox{ (by (\ref{n1}))}\\
			&\vdash C_G \psi \mbox{ (by RC)}
		\end{align*}	
		Then $\vdash C_G(\varphi \to \psi) \to (C_G\varphi \to C_G\psi)$, by \hyperref[deduc]{Deduction theorem}.

		\item This standard result in modal logics follows from Distribution axiom and propositional reasoning.
		
		\item First we prove that $E_G(\bigwedge\limits_{j=1}^m \varphi_j)$ implies $ \bigwedge\limits_{j=1}^m E_G \varphi_j$.
		\begin{align*}
			E_G(\bigwedge\limits_{j=1}^m \varphi_j) &\vdash \{ K_i( \bigwedge\limits_{j=1}^m \varphi_j) \,| \, i \in G\}  \mbox{ (by AE)}\\
			&\vdash  \{  \bigwedge\limits_{j=1}^m K_i  \varphi_j \, | \, i \in G \}   \mbox{ (by the  previous part of the proposition (\ref{conjunction k}))} \\
			&\vdash  \bigcup\limits_{j=1}^m \{  K_i  \varphi_j \, | \, i \in G \}  \mbox{ (since } \bigwedge\limits_{j=1}^m K_i  \varphi_j \to K_i  \varphi_j, \,\, \forall j=1,...,m )\\
			&\vdash   \bigcup\limits_{j=1}^m \{ E_G \varphi_j\, | \, i \in G \}  \mbox{ (by RE)}\\
			&\vdash  \bigwedge\limits_{j=1}^m E_G \varphi_j \,\,\, \mbox{ (by propositional reasoning)}
		\end{align*}
		Conversely,
		\begin{align*}
			\bigwedge\limits_{j=1}^m E_G\varphi_j &\vdash \{ K_i \varphi_1 \,|\, i \in G  \}  \cup  \{ K_i \varphi_2 \,|\, i \in G  \} \cup ... \cup \{ K_i \varphi_m \,|\, i \in G  \} \mbox{ (by AE) }\\
			&\vdash  \{  \bigwedge\limits_{j=1}^m K_i  \varphi_j \, | \, i \in G \}   \\
			&\vdash \{ K_i( \bigwedge\limits_{j=1}^m \varphi_j) \,| \, i \in G\}   \mbox{ (by the  previous part of the proposition (\ref{conjunction k}))} \\
			&\vdash E_G(\bigwedge\limits_{j=1}^m \varphi_j)  \,\,\,  \mbox{ (by RE)}
		\end{align*}
		
		Therefore, by \hyperref[deduc]{Deduction theorem} we have that   $\vdash E_G(\bigwedge\limits_{j=1}^m \varphi_j)\equiv\bigwedge\limits_{j=1}^m E_G\varphi_j$.

		\item   
		
		$\vdash C_G\varphi \to E_G \{  (E_G)^m \varphi \, | \, m \in \mathbb N    \} $, by AC
		
		$ E_G \{  (E_G)^m \varphi \, | \, m \in \mathbb N    \} \vdash E_G C_G \varphi$, by RC and  Corollary  \ref{cor necessitation}

		$\vdash C_G\varphi \to E_G  C_G \varphi$, by previous two
		
		$\vdash C_G\varphi \to E_G\varphi $, by AC
		
		$\vdash C_G\varphi \to E_G(\varphi \wedge C_G \varphi) $, by previous two and the previous part (\ref{conjunction e}) of the proposition.
	\end{enumerate}
\end{proof}

Note that  
(\ref{CD})
and
(\ref{axiom fixed point}) (the fixed-point axiom) are two standard axioms of epistemic logic with common knowledge \cite{KP,Guide}. The axiom (\ref{CD}) is often written in an equivalent form $$(C_G\varphi \wedge C_G(\varphi\to \psi))\to C_G \psi.$$ The previous result shows that they are provable in our axiomatic system $Ax_{PCK^{fo}}$.

\smallskip

The standard axiomatization for epistemic logics (with finitely many agents) \cite{KP,Guide} also includes one axiom for group knowledge operator, which states that group knowledge $E_G\varphi$ is equivalent to the conjunction of $K_i\varphi$, where all  the agents $i$ from the group are considered. The next result shows that both that axiom and its probabilistic variant hold in our logic.

\begin{proposition}
	Let $\varphi$ be a formula, $r\in [0,1]_\mathbb Q$, and let $G \in \mathcal G$ be a finite set of agents. Then the following hold.
	\begin{enumerate}
		
		\item  $\vdash E_G \varphi \equiv \bigwedge_{i\in G}K_i\varphi$
		\item  $\vdash E^r_G \varphi \equiv \bigwedge_{i\in G}K^r_i\varphi$
	\end{enumerate}  
	
\end{proposition}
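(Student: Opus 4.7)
The plan is to prove each equivalence by splitting into the two implications and relying crucially on the finiteness of $G$, which guarantees that $\bigwedge_{i\in G} K_i\varphi$ and $\bigwedge_{i\in G} K^r_i\varphi$ are well-formed formulas of the language.

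For the forward direction of (1), I would apply axiom AE: for each $i\in G$ we have $\vdash E_G\varphi \to K_i\varphi$. Since $G$ is finite, propositional reasoning allows me to conclude $\vdash E_G\varphi \to \bigwedge_{i\in G} K_i\varphi$. For the forward direction of (2), the same argument works verbatim using APE in place of AE, yielding $\vdash E^r_G\varphi \to \bigwedge_{i\in G} K^r_i\varphi$.

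For the backward direction of (1), I would invoke the infinitary rule RE specialised to $k=0$ with $\theta_0 = \bigwedge_{i\in G} K_i\varphi$. Since for each $i\in G$ the implication $\bigwedge_{j\in G} K_j\varphi \to K_i\varphi$ is a propositional tautology, the full premise set $\{\theta_0 \to K_i\varphi \mid i\in G\}$ of RE is available, and the rule yields $\vdash \bigwedge_{i\in G} K_i\varphi \to E_G\varphi$. The backward direction of (2) is entirely analogous, using RPE in place of RE with $\theta_0 = \bigwedge_{i\in G} K^r_i\varphi$ and the premise set $\{\theta_0 \to K^r_i\varphi \mid i\in G\}$.

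There is no substantial obstacle here; the statement is essentially a sanity-check that in the finite-group case the novel infinitary rules RE and RPE recover the standard finitary definitions of $E_G$ and $E^r_G$ from the literature. The only subtle point worth emphasising is that the argument goes through uniformly for any finite $G$, but fails for infinite $G$ precisely because the conjunction on the right-hand side is then not a formula of $\mathcal L_{PCK^{fo}}$, which is exactly why the infinitary rules RE and RPE were introduced in the first place.
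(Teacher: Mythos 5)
Your proof is correct and follows essentially the same route as the paper: axiom AE (resp.\ APE) plus propositional reasoning over the finite conjunction for the forward implication, and the rule RE (resp.\ RPE) at $k=0$ for the converse. The only cosmetic difference is that you instantiate $\theta_0$ as the conjunction $\bigwedge_{i\in G}K_i\varphi$ itself (so the premises are propositional tautologies and the implication comes out of the rule directly), whereas the paper takes $\theta_0=\top$ and then appeals to the Deduction theorem — both are valid, and yours slightly economizes by avoiding that appeal.
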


\begin{proof}\hfill
	\begin{enumerate}
		
		\item From the axiom AE, using propositional reasoning, we can obtain $\vdash E_G \varphi \to \bigwedge_{i\in G}K_i\varphi$. On the other hand, from the inference rule RE, choosing $k=0$ and $\theta_0=\top$, we obtain 
		$ \{ K_i \varphi  \, | \,  i \in G\}\vdash  E_G \varphi$, i.e., $\bigwedge_{i\in G}K_i\varphi\vdash  E_G \varphi$, so $\vdash\bigwedge_{i\in G}K_i\varphi\to  E_G \varphi$ follows from Deduction theorem.
		
		\item This result can be proved in the same way as the first statement, using the obvious analogies between the axioms AE and APE, and the rules RE and RPE.
	\end{enumerate}  
\end{proof}

Note that the distribution properties of the epistemic operators $K_i$, $E_G$ and $C_G$, proved in Proposition \ref{prop epistemic} (1)-(3), cannot be directly transferred to the properties of the corresponding operators of probabilistic knowledge. For example, it is easy to see that  $ E^r_G(\varphi \to \psi) \to (E^r_G\varphi \to E^r_G\psi)$ is not a valid formula.\footnote{On the other hand, it can be shown that the formula $ E^1_G(\varphi \to \psi) \to (E^r_G\varphi \to E^r_G\psi)$ is valid and it is a theorem of our logic (see (\ref{E proof})).} Nevertheless, we can prove that probabilistic versions of  knowledge, group knowledge and common knowledge are closed under consequences.

\begin{proposition}
	Let $\varphi$ and $\psi$  be formulas such that $\vdash\varphi\to\psi$. Let $r\in[0,1]_\mathbb Q$, $i\in \mathcal{A}$ and $G \in \mathcal G$. Then:

	\begin{enumerate}


		\item  \label{KPD1} $\vdash K^r_i\varphi \to K^r_i\psi $ 
		
		\item  \label{EPD1} $\vdash E_G^r\varphi \to E_G^r\psi$ 
		
		
		\item \label{CD1}  $ \vdash C_G^r\varphi \to C_G^r\psi$

	\end{enumerate}
\end{proposition}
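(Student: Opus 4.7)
The plan is to prove the three statements in the given order. The single preliminary I need is a monotonicity lemma for the probability operator: if $\vdash \varphi\to\psi$, then $\vdash P_{i,\geq r}\varphi \to P_{i,\geq r}\psi$. This is a standard consequence of P1--P5 and RP: apply RP to obtain $\vdash P_{i,\geq 1}(\varphi\to\psi)$, decompose $\psi$ into the disjoint pair $\varphi \vee (\neg\varphi\wedge \psi)$, and combine $P_{i,\geq r}\varphi$ with $P_{i,\geq 0}(\neg\varphi\wedge \psi)$ (the latter from P1) via P4 to get $P_{i,\geq r}\psi$.

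Statement (1) is then immediate: applying RK to the monotonicity lemma gives $\vdash K_i(P_{i,\geq r}\varphi \to P_{i,\geq r}\psi)$; the distribution axiom AK then yields $\vdash K_i P_{i,\geq r}\varphi \to K_i P_{i,\geq r}\psi$, which by definition of $K_i^r$ is $\vdash K_i^r\varphi \to K_i^r\psi$. For statement (2) I combine (1) with APE to obtain, for every $i\in G$, the implication $\vdash E_G^r\varphi \to K_i^r\psi$. Applying RPE in its specialized form with $k=0$ and $\theta_0 = E_G^r\varphi$ (so that each premise reads $E_G^r\varphi \to K_i^r\psi$ and the conclusion reads $E_G^r\varphi \to E_G^r\psi$) then gives the desired implication.

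For statement (3) I would first prove by induction on $m$ that $\vdash (F_G^r)^m\varphi \to (F_G^r)^m\psi$. The base $m=0$ is trivial since both sides are $\top$. For the inductive step, combining the induction hypothesis with $\vdash\varphi\to\psi$ gives $\vdash (\varphi\wedge (F_G^r)^m\varphi) \to (\psi\wedge (F_G^r)^m\psi)$; applying statement (2) to this implication yields $\vdash E_G^r(\varphi\wedge (F_G^r)^m\varphi) \to E_G^r(\psi\wedge (F_G^r)^m\psi)$, which is $\vdash (F_G^r)^{m+1}\varphi \to (F_G^r)^{m+1}\psi$. Composing with APC produces $\vdash C_G^r\varphi \to (F_G^r)^m\psi$ for every $m\in\mathbb N$, and a single application of RPC with $k=0$ and $\theta_0 = C_G^r\varphi$ concludes $\vdash C_G^r\varphi \to C_G^r\psi$. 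The main technical obstacle is the probabilistic monotonicity lemma, which is the only step not reducible to pure bookkeeping; everything else exploits the observation, already made after the rule presentation, that the $k$-nested implication machinery collapses to the intuitive premise/conclusion format when $k=0$ and $\theta_0$ is chosen appropriately.
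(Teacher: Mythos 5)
Your overall architecture is the same as the paper's: reduce everything to a monotonicity fact for $P_{i,\geq r}$, lift it through $K_i$ via RK and AK, through $E_G^r$ via APE and RPE (with the rule instantiated at $k=0$ and a suitably chosen $\theta_0$), and handle $C_G^r$ by induction on $(F_G^r)^m$ followed by APC and RPC. Parts (2) and (3) are fine; in fact your inductive step for (3), which derives $\vdash(\varphi\wedge(F_G^r)^m\varphi)\to(\psi\wedge(F_G^r)^m\psi)$ and applies part (2) to it, is cleaner than the paper's version (which contains a slip, writing $\varphi\wedge(F_G^r)^m\psi$ where $\psi\wedge(F_G^r)^m\psi$ is needed).

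The one place you are too quick is the monotonicity lemma itself, and it is exactly the place where the paper spends most of its effort. Your P4 step, applied to the disjoint pair $\varphi$ and $\neg\varphi\wedge\psi$, yields $\vdash P_{i,\geq r}\varphi\to P_{i,\geq r}(\varphi\vee(\neg\varphi\wedge\psi))$; but the target $P_{i,\geq r}\psi$ is only \emph{provably equivalent} to this, not syntactically identical, and the system has no primitive rule for substituting provable equivalents under $P_{i,\geq r}$. Closing that last step requires an additional argument from P4, P5 and RP (this is precisely the pattern of the paper's derivation of $\vdash P_{i,\geq r}(\varphi\vee\bot)\to P_{i,\geq r}\neg\neg\varphi$ and the surrounding reductio). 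The paper avoids the issue by proving the stronger schema $\vdash P_{i,\geq 1}(\varphi\to\psi)\to(P_{i,\geq r}\varphi\to P_{i,\geq r}\psi)$ outright, from which monotonicity is immediate via RP; you should either do the same or supply the equivalence-substitution step explicitly, since as written your lemma's proof does not terminate at $P_{i,\geq r}\psi$. With that repaired, the rest of your argument goes through as stated.
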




\begin{proof}\hfill
	
	
	\begin{enumerate}

		\item 	Note that 	
		\begin{equation}\label{eqn}
			\vdash K_i(P_{i,\geq 1}{(\varphi \to \psi)} \to (  P_{i,\geq r}{\varphi }  \to P_{i,\geq r}{\psi}) ) \to (K_iP_{i,\geq 1}{(\varphi \to \psi)} \to K_i(  P_{i,\geq r}{\varphi }  \to P_{i,\geq r}{\psi}) )
		\end{equation}
		by Proposition \ref{prop epistemic}(\ref{KD}).
		From the assumption $\vdash \varphi\to \psi$, applying the rule RP and then the rule RK, we obtain 
		\begin{equation}\label{eq1n}
			\vdash K_i^1(\varphi\to \psi).
		\end{equation}
		
		Note that $\vdash \neg \varphi \lor \neg \bot$ (a propositional tautology), so	
		\begin{equation}\label{eq1}
			\vdash P_{i,\geq 1}( \neg \varphi \lor \neg \bot), \mbox{ by RP }
		\end{equation}
		
		Also, $\vdash \neg (\varphi \land \neg \bot) \lor \neg \neg \varphi $, so
		\begin{equation}\label{eq2}	
			\vdash P_{i,\geq 1}(\neg (\varphi \land \neg \bot) \lor \neg \neg \varphi),  \mbox{ by  RP }
		\end{equation}

		By P4 we have $\vdash (P_{i,\geq r}{\varphi} \wedge P_{i,\geq 0}{\neg \bot} \wedge P_{i, \geq 1}{(\neg \varphi \lor \neg \bot) }) \to P_{i,\geq 1}(\varphi \vee \bot) $, so	
		\begin{equation}\label{eq3}	
			\vdash P_{i,\geq r}{\varphi} \to P_{i,\geq r}{(\varphi \lor \bot)},\mbox{ by (\ref{eq1}) using the instance $P_{i,\geq 0}{\neg \bot}$ of P1 } 
		\end{equation}
		\noindent The formula $P_{i,\geq r}{(\varphi \lor \bot)}$ denotes $P_{i,\geq r}{\neg (\neg \varphi \land \neg \bot)}$, which is the same as $P_{i,\geq 1-(1-r)}{\neg (\neg \varphi \land \neg \bot)}$, and can be abbreviated as $P_{i,\leq 1-r}{(\neg \varphi \land \neg \bot)}$. Similarly, $\neg P_{i,\geq r} \neg \neg \varphi$ denotes $P_{i,< r} \neg \neg \varphi$. From P5 we obtain 
		$\vdash (P_{i,\leq 1-r}{(\neg \varphi \land \neg \bot)} \wedge P_{i,< r}{\neg \neg \varphi}) \to P_{i,< 1}{((\neg \varphi \land \neg \bot) \vee \neg \neg \varphi)} $.
		
		Since $P_{i,\geq 1}(\neg (\varphi \land \neg \bot) \lor \neg \neg \varphi)$ denotes $\neg P_{i,< 1}{((\neg \varphi \land \neg \bot) \vee \neg \neg \varphi)} $, from (\ref{eq2}) we have
		
		$\vdash (P_{i,\leq 1-r}{(\neg \varphi \land \neg \bot)} \wedge P_{i,< r}{\neg \neg \varphi}) \to
		%
		P_{i,< 1}{((\neg \varphi \land \neg \bot) \vee \neg \neg \varphi)} \land \neg P_{i,< 1}{((\neg \varphi \land \neg \bot) \vee \neg \neg \varphi))} $, by P5,
		and therefore
		$\vdash P_{i,\leq 1-r}{(\neg \varphi \land \neg \bot)} \to P_{i,< r}{\neg \neg \varphi} $, i.e.,	
		\begin{equation}\label{eq4}	
		\vdash P_{i,\geq r}{( \varphi \lor  \bot)} \to P_{i,\geq r}{\neg \neg \varphi} 
		\end{equation}
		
		From (\ref{eq3}) and (\ref{eq4}) we obtain 	$\vdash P_{i,\geq r}{( \varphi )} \to P_{i,\geq r}{\neg \neg \varphi} $. The negation of the formula 
		\begin{equation}\label{prob k}	
			P_{i,\geq 1}{(\varphi \to \psi)} \to (  P_{i,\geq r}{\varphi }  \to P_{i,\geq r}{\psi})
		\end{equation}
		is equivalent to $P_{i,\geq 1}(\neg \varphi \lor \psi) \land P_{i,\geq r}\varphi \land P_{i,<r}\psi$. Since $P_{i,\geq r}{ \varphi } \to P_{i,\geq r}{\neg \neg \varphi} $, then $P_{i,\geq 1}(\neg \varphi \lor \psi) \land P_{i,\geq r}{\neg \neg \varphi} \land P_{i,<r}\psi$, which can be written as $P_{i,\geq 1}(\neg \varphi \lor \psi) \land P_{i,\leq 1-r}{\neg  \varphi} \land P_{i,<r}\psi$. Then
		$\vdash P_{i,\leq 1-r}{\neg  \varphi} \land P_{i,<r}\psi \to P_{i,< r}(\neg \varphi \lor \psi)$, by P5, and since
		$ P_{i,<1}{ \varphi}$ is an abbreviation for $\neg P_{i,\geq 1}{ \varphi} $, we have
		$\vdash \neg( P_{i,\geq 1}{(\varphi \to \psi)} \to (  P_{i,\geq r}{\varphi }  \to P_{i,\geq r}{\psi})) \to P_{i,\geq 1}(\neg \varphi \lor \psi) \land \neg P_{i,\geq 1}(\neg \varphi \lor \psi)$, a contradiction.	Thus, the formula (\ref{prob k}) is a theorem of our axiomatization. By applying the rule RK to the theorem, we obtain 
		\begin{equation}\label{prob k 2}\vdash K_i (P_{i,\geq 1}{(\varphi \to \psi)} \to (  P_{i,\geq r}{\varphi }  \to P_{i,\geq r}{\psi}) )
		\end{equation}
		
		From (\ref{eqn}) and (\ref{prob k 2}) we obtain
		\begin{equation}\label{prob k 3}\vdash K_iP_{i,\geq 1}{(\varphi \to \psi)} \to K_i(  P_{i,\geq r}{\varphi }  \to P_{i,\geq r}{\psi}).
		\end{equation}
		By Proposition \ref{prop epistemic}(\ref{KD}), we have
		\begin{equation}\label{prob k 4}\vdash K_i(  P_{i,\geq r}{\varphi }  \to P_{i,\geq r}{\psi}) \to ( K_i P_{i,\geq r}{\varphi } \to  K_i  P_{i,\geq r}{\psi }  ).
		\end{equation}
		From (\ref{prob k 3}) and (\ref{prob k 4}), we obtain 
		$\vdash K_iP_{i,\geq 1}{(\varphi \to \psi)} \to	( K_i P_{i,\geq r}{\varphi } \to  K_i  P_{i,\geq r}{\psi }  ) $, i.e.,
		\begin{equation}\label{prob k 5}\vdash K^1_i(\varphi \to \psi) \to (K^r_i\varphi \to K^r_i\psi).
		\end{equation}
		Finally, from (\ref{eq1n}) and (\ref{prob k 5}) we obtain $\vdash K^r_i\varphi \to K^r_i\psi$.

		\item We start with the following derivation:	\begin{align*}
			E_G^r \varphi \wedge E_G^1(\varphi \to \psi) &\vdash \{ K_i^r \varphi \wedge K_i^1(\varphi \to \psi) \, | \, \forall i \in G\} \mbox{, by APE} \\
			&\vdash \{ K_i^r \psi \, | \, \forall i \in G\} \mbox{, by (\ref{prob k 5})}\\
			&\vdash E_G^r \psi \mbox{, by RPE}
		\end{align*}
		Therefore, 
		\begin{equation}\label{E proof}\vdash E_G^1(\varphi \to \psi) \to (E_G^r\varphi \to E_G^r\psi)
		\end{equation} by \hyperref[deduc]{Deduction theorem}.
		From (\ref{eq1}), using the rule RPE we obtain 
		\begin{equation}\label{E1}\vdash E_G^1(\varphi \to \psi).
		\end{equation}
		Finally, from (\ref{E proof}) and (\ref{E1}) we obtain $\vdash E_G^r\varphi \to E_G^r\psi$.

		\item First we prove that
		\begin{equation}\label{C1}
			\vdash (F_G^r)^{m}\varphi \to (F_G^r)^{m}\psi
		\end{equation}
		holds for every $m$.
		We prove the claim by induction.
		
		Induction base follows trivially since $ (F_G^r)^0 \varphi = \top$.
		
		Suppose that $\vdash (F_G^r)^m\varphi \to (F_G^r)^m\psi$ (induction hypothesis).
		
		$\vdash (\varphi \wedge (F_G^r)^m\varphi) \to (\varphi \wedge (F_G^r)^m\psi)$
		
		$\vdash P_{i,\geq 1}((\varphi \wedge (F_G^r)^m\varphi) \to (\varphi \wedge (F_G^r)^m\psi)), \forall i \in G$, by RP 
		
		$\vdash K_iP_{i,\geq 1}((\varphi \wedge (F_G^r)^m\varphi) \to (\varphi \wedge (F_G^r)^m\psi)), \forall i \in G$, by RK 
		
		$\vdash E_G^1((\varphi \wedge (F_G^r)^m\varphi) \to (\varphi \wedge (F_G^r)^m\psi))$, by RPE
		
		$\vdash E_G^1((\varphi \wedge (F_G^r)^m\varphi) \to (\varphi \wedge (F_G^r)^m\psi)) \to (E_G^r(\varphi \wedge (F_G^r)^m\varphi)     \to E_G^r (\varphi \wedge (F_G^r)^m\psi) )$, by (\ref{E proof}) 
		
		$\vdash E_G^r(\varphi \wedge (F_G^r)^m\varphi)     \to E_G^r (\varphi \wedge (F_G^r)^m\psi) $ by previous two, ie.
		
		$\vdash (F_G^r)^{m+1}\varphi \to (F_G^r)^{m+1}\psi$.
		
		Thus, (\ref{C1}) holds.
		\begin{align*}
			C_G^r \varphi  &\vdash \{ (F_G^r)^m \varphi \, | \, \forall m \in \mathbb{N}_0\} \mbox{ (by APC)} \\
			&\vdash \{ (F_G^r)^m \psi \, | \, \forall m \in \mathbb{N}_0\} \mbox{ , by (\ref{C1}) }\\
			&\vdash C_G^r \psi \mbox{ , by RPC}
		\end{align*}
		Now $\vdash C_G^r \varphi \to C_G^r \psi$ follows from Deduction theorem.
		
	\end{enumerate}
\end{proof}

At the end of this section, we prove several results about maximal consistent sets with respect to our axiomatic system. Those results will be useful in proving the Truth lemma.
\begin{lemma}\label{maxconsprop}
	
	Let $T$ be a maximal consistent set of formulas for $Ax_{PCK^{fo}}$. Then $T$ satisfies the following properties:
	
	\begin{enumerate}

		\item  for every formula $\varphi$, exactly one of $\varphi$ and $\neg \varphi$ is in $T$,
		
		\item \label{ded clo} $T$ is deductively closed,
		
		\item $\varphi \land \psi \in T$ iff $\varphi \in T$ and $\psi \in T$,
		
		
		\item \label{imp} if $\{\varphi, \varphi \to \psi\} \subseteq T$, then $\psi \in T$,
		
		
		\item if $r = \sup \, \{q  \in [0,1]_{\mathbb Q} \, | \, P_{i,\geq q}{\varphi} \in T\}$ and $r \in [0,1]_{\mathbb Q}$, then $P_{i,\geq r}{\varphi} \in T$.
		
	\end{enumerate}
	
\end{lemma}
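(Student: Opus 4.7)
My plan is to handle the five items in order, since most rely on consistency plus one simple axiom or rule, and only the last one uses the nontrivial Archimedean rule RA.

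For (1), at most one of $\varphi, \neg\varphi$ can lie in $T$: if both were, then $T \vdash \bot$ by propositional reasoning, contradicting consistency. For the other direction I argue by contradiction: if neither is in $T$, then maximality makes both $T \cup \{\varphi\}$ and $T \cup \{\neg\varphi\}$ inconsistent, so $T \cup \{\varphi\}\vdash \bot$ and $T \cup \{\neg\varphi\}\vdash \bot$. Applying the Deduction theorem twice yields $T \vdash \neg\varphi$ and $T \vdash \neg\neg\varphi$, so $T \vdash \bot$, a contradiction. Item (2) is then immediate: if $T \vdash \varphi$ but $\varphi \notin T$, then $\neg\varphi \in T$ by (1), so $T$ derives both $\varphi$ and $\neg\varphi$, hence $\bot$. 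Items (3) and (4) follow from (2) combined with the propositional tautologies $\varphi \wedge \psi \to \varphi$, $\varphi \wedge \psi \to \psi$, $\varphi \to (\psi \to \varphi \wedge \psi)$, and a single application of MP, respectively.

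The interesting item is (5). First I would verify the routine monotonicity schema $\vdash P_{i,\geq q}\varphi \to P_{i,\geq t}\varphi$ whenever $q \geq t$ in $[0,1]_\mathbb{Q}$; this is a standard consequence of P2 and P3 with propositional reasoning (if $q\geq t$ then $\neg P_{i,\geq t}\varphi$, i.e., $P_{i,<t}\varphi$, gives $P_{i,\leq t}\varphi$ by P3, and if $q>t$ an application of P2 yields $P_{i,<q}\varphi$, contradicting $P_{i,\geq q}\varphi$; the case $q=t$ is trivial). If $r=0$, the conclusion $P_{i,\geq 0}\varphi \in T$ is immediate from axiom P1 and deductive closure. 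So assume $r>0$ and fix any $m \in \mathbb{N}$ with $m \geq 1/r$. Since $r - 1/m < r$ and $r$ is the supremum, there is $q_m \in [0,1]_{\mathbb{Q}}$ with $r - 1/m < q_m \leq r$ and $P_{i,\geq q_m}\varphi \in T$. By the monotonicity schema and (2), $P_{i,\geq r - 1/m}\varphi \in T$.

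Having obtained $P_{i,\geq r - 1/m}\varphi \in T$ for every such $m$, I apply the Archimedean rule RA with $k=0$ and $\theta_0 = \top$, whose premises $\top \to P_{i,\geq r-1/m}\varphi$ are all derivable from $T$ (by propositional reasoning). The conclusion $\top \to P_{i,\geq r}\varphi$ then yields $T \vdash P_{i,\geq r}\varphi$, and deductive closure (2) gives $P_{i,\geq r}\varphi \in T$. The only delicate point is checking that RA is legitimately applicable here — RA is an infinitary rule and its premises need only be derivable from $T$ (unlike RK, RP, which are restricted to theorems) — so the step is justified by the definition of derivability. This is the expected main obstacle; the rest are bookkeeping.
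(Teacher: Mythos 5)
Your proof is correct and follows essentially the same route as the paper's: items (1)--(4) via maximality, the Deduction theorem, and deductive closure, and item (5) via the Archimedean rule RA with $k=0$, $\theta_0=\top$. The only difference is that you spell out the monotonicity schema derived from P2 and P3 (and the $r=0$ case via P1), which the paper leaves implicit when it asserts $T \vdash P_{i,\geq q}\varphi$ for every $q<r$; this is a welcome clarification rather than a divergence.
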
 

\begin{proof}\hfill
	
	\begin{enumerate}
		
		\item  If both formulas $\varphi, \neg \varphi \in T$, $T$ would be inconsistent. Suppose $\varphi \not \in T$. Since $T$ is maximal, $T \cup \{\varphi\}$ is inconsistent, and by the \hyperref[deduc]{Deduction theorem} $T \vdash \neg \varphi$. Similarly, if $\neg \varphi \not \in T$, then $T \vdash \varphi$. Therefore, if both formulas $\varphi, \neg \varphi \not \in T$, set $T$ would be inconsistent, so exactly one of them is in $T$.
		
		\item Otherwise, if there is some $\varphi$ such that $T\vdash \varphi$ and $\varphi \not \in T$ then, by the previous part of this lemma, $\neg \varphi \in T$, so $T$ would be inconsistent. 
		
		\item  Suppose $\varphi \in T$ and $\psi \in T$. Then $T \vdash \varphi$, $T \vdash \psi$, $T \vdash \varphi \land \psi$ and $ \varphi \land \psi \in T$, because $T$ is deductively closed by Lemma \ref{maxconsprop}(\ref{ded clo}). For the other direction, let $\varphi \land \psi \in T$. Then $T \vdash \varphi \land \psi$, $T \vdash (\varphi \land \psi) \to \varphi$, $T \vdash (\varphi \land \psi) \to \psi$, $T \vdash \varphi$ and $T \vdash \psi$. Therefore $\varphi, \psi \in T$, by Lemma \ref{maxconsprop}(\ref{ded clo}).
		
		
		\item If $\{\varphi, \varphi \to \psi\} \subseteq T$, then $T \vdash \varphi$, $T \vdash \varphi \to \psi$ and $T \vdash \psi$, so $\psi \in T$ by Lemma \ref{maxconsprop}(\ref{ded clo}).
		
		
		\item Let $r = \sup \, \{q  \, | \, P_{i,\geq q}{\varphi} \in T\}$, thus  $T \vdash P_{i,\geq q}{\varphi}$ for every $q<r$, $q \in [0,1]_{\mathbb Q}$. Then by the Archimedean rule RA, we have that $T \vdash P_{i,\geq r}{\varphi}$. Therefore $P_{i,\geq r}{\varphi} \in T$ by Lemma \ref{maxconsprop}(\ref{ded clo}).
	\end{enumerate}
\end{proof}

\begin{lemma} \label{max}
	
	Let $V$ be a maximal consistent set of formulas.
	
	\begin{enumerate}
		
		\item $E_G \varphi \in V$ iff ( $K_i \varphi \in V$ $\, \mbox{ for all }  i \in G$ )\vspace{2pt}
		
		\item $E_G^r \varphi \in V$ iff ( $K_i^r \varphi \in V$ $\, \mbox{ for all }  i \in G$ )\vspace{2pt}
		
		\item $C_G \varphi \in V$ iff ( $(E_G)^m \varphi \in V$ $\, \mbox{ for all }  m \in \mathbb N$ )\vspace{2pt}
		
		\item $C_G^r \varphi \in V$ iff ($ (F_G^r)^m \varphi \in V$  $\, \mbox{ for all }  m \in \mathbb N $)\vspace{2pt}
		
	\end{enumerate}

\end{lemma}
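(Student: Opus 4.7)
The plan is to handle the four biconditionals uniformly, using the same pattern in each case. Each item pairs a ``top-down'' axiom (AE, APE, AC, or APC) with the matching ``bottom-up'' infinitary rule (RE, RPE, RC, or RPC), and the argument just consists of extracting each direction from the corresponding axiom or rule together with deductive closure of $V$ (Lemma \ref{maxconsprop}(\ref{ded clo})).

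For the left-to-right direction of (1), assume $E_G\varphi \in V$. For each $i\in G$, axiom AE gives $\vdash E_G\varphi \to K_i\varphi$, so by Lemma \ref{maxconsprop}(\ref{imp}) we get $K_i\varphi \in V$. The forward direction of (2) is identical with APE in place of AE. For (3), axiom AC gives $\vdash C_G\varphi \to (E_G)^m\varphi$ for every $m \in \mathbb N$, and the same closure argument puts every $(E_G)^m\varphi$ into $V$. Item (4) is the same again with APC and $(F_G^r)^m\varphi$.

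For the right-to-left direction of (1), suppose $K_i\varphi \in V$ for every $i\in G$. Specializing rule RE to $k=0$ and $\theta_0 = \top$ (so $\Phi_{0,\boldsymbol{\uptheta},\mathbf{X}}(\tau)$ becomes $\top \to \tau$), the premises $\{\top \to K_i\varphi \mid i \in G\}$ are all derivable from $V$, and the rule yields $V \vdash \top \to E_G\varphi$, hence $V \vdash E_G\varphi$, hence $E_G\varphi \in V$ by deductive closure. The backward directions of (2), (3), (4) follow the same template, using RPE on the premises $\{K_i^r\varphi \mid i\in G\}$, RC on $\{(E_G)^m\varphi \mid m \in \mathbb N\}$, and RPC on $\{(F_G^r)^m\varphi \mid m \in \mathbb N\}$ respectively, in each case with $k=0$ and $\theta_0=\top$ so the $k$-nested implications collapse to implications from $\top$.

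I expect no real obstacle: the axioms and rules were designed as matched pairs precisely so that each modal operator of the form ``for all $i \in G$'' or ``for all $m \in \mathbb N$'' enjoys this equivalence on maximal consistent sets. The only two points that need to be handled carefully are (i) noting explicitly that $\Phi_{0,(\top),()}(\tau) = \top \to \tau$, which is what allows the infinitary rules to be applied in their simplest intuitive form, and (ii) invoking deductive closure of $V$ at the end of each direction so that provability from $V$ transfers to membership in $V$.
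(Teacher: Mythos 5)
Your proposal is correct and matches the paper's own proof essentially verbatim: the forward directions use the axioms AE, APE, AC, APC together with Lemma \ref{maxconsprop}(\ref{imp}), and the backward directions use the rules RE, RPE, RC, RPC instantiated with $k=0$ and $\theta_0=\top$, followed by deductive closure. No further comment is needed.
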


\begin{proof}
	
	For the proof of (1), suppose that $E_G \varphi \in V$. Since  $E_G\varphi \to K_i \varphi$, $ \mbox{for all } i \in G$ is the axiom AE, then also $E_G\varphi \to K_i \varphi \in V$  $\mbox{ for all }  i \in G $. Therefore $ K_i \varphi \in V$  $ \mbox{for all }  i \in G $  by Lemma \ref{maxconsprop}(\ref{imp}) because $V$ is maximal consistent. For the other direction, if $ K_i \varphi \in V$  $\mbox{ for all }  i \in G $, and since $   \{  K_i \varphi \ | \ i\in G \}\vdash E_G \varphi$ (by the rule RE, where $k=0$ and $\theta_0=\top$), we have that $E_G\varphi \in V$, by Lemma \ref{maxconsprop}(\ref{ded clo}).
	
	\smallskip
	
	The cases (2), (3) and (4) can be proved in a similar way, by replacing  $E_G \varphi$, $K_i \varphi$, $ \mbox{for all } i \in G$, axiom AE and rule RE with $E_G^r \varphi, K_i^r \varphi$, $ \mbox{for all } i \in G$, APE, RPE (case (2)), $C_G \varphi, (E_G)^m \varphi, \mbox{ for all }  m \in \mathbb N$, AC, RC (case (3)), and $C_G^r \varphi, (F_G^r)^m \varphi \mbox{ for all }  m \in \mathbb N$, APC, RPC (case (4)), respectively.
\end{proof}

\section{Completeness}

In this section we prove that the axiomatic system $Ax_{PCK^{fo}}$ is strongly complete for the class of measurable $\mathcal{M}^{MEAS}_\mathcal{A}$ models, using a Henkin-style construction \cite{henkin1949}. We prove completeness in three steps. First, we extend a theory $T$ to a saturated theory $T^*$ step by step, in an infinite process, considering in each step one sentence and checking its consistency with the considered theory in that step. Due to the presence of infinitary rules, we modify the standard completion technique in the case that the considered sentence can be derived by an infinitary rules, by adding the negation of one of the premisses of the rule. Second, we use the saturated theories to construct a special $PCK^{fo}$ model, that we will call \emph{canonical model}, and we show that it belongs to the class $\mathcal{M}^{MEAS}_\mathcal{A}$.
Finally, using the saturation $T^*$ of the considered theory $T$, we show that $T$ is satisfiable in the corresponding state $s_{T^{*}}$ of the canonical model.

\subsection{Lindenbaum's theorem}\label{sect lindenbaum}

We start with the Henkin construction of saturated extensions of theories. For that purpose, we consider a broader language, obtained by adding countably many novel constant symbols.

\begin{theorem}[Lindenbaum's theorem]\label{lindenbaum} Let $T$ be a consistent theory in the language $\mathcal L_{PCK^{fo}}$, and $C$ an infinite enumerable set of new constant symbols (i.e. $C \cap \mathcal L_{PCK^{fo}} = \emptyset$).  Then $T$ can be extended to a saturated theory $T^{*}$ in the language $\mathcal L^*= L_{PCK^{fo}} \cup C$. 
\end{theorem}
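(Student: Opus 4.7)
The plan is to prove this by a standard Henkin-style construction, modified to cope with the infinitary inference rules (RE, RC, RA, RPE, RPC) and the saturation (witness) condition for existentials.

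First I would fix an enumeration $\varphi_0, \varphi_1, \ldots$ of all sentences of $\mathcal{L}^*$, and build an increasing chain of consistent theories $T_0 \subseteq T_1 \subseteq \cdots$ starting from $T_0 = T$. At step $n$, I consider $\varphi_n$: if $T_n \cup \{\varphi_n\}$ is consistent, put $\varphi_n$ into $T_{n+1}$; otherwise, by the Deduction theorem we have $T_n \vdash \neg\varphi_n$, so $T_n \cup \{\neg\varphi_n\}$ is consistent and we put $\neg\varphi_n$ into $T_{n+1}$. Then $T^* = \bigcup_n T_n$ will be the candidate saturated extension.

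Two special situations require extra care at each step. First, for the witness condition, if we just added $\neg(\forall x)\psi(x)$ to $T_{n+1}$, I also add $\neg\psi(c)$ for some constant $c \in C$ not occurring in $T_n$ or in $\varphi_n$. Consistency is preserved: if $T_n \cup \{\neg(\forall x)\psi(x), \neg\psi(c)\}$ derived $\bot$, then $T_n \cup \{\neg(\forall x)\psi(x)\} \vdash \psi(c)$, and since $c$ is fresh we could generalize via FOR to obtain $(\forall x)\psi(x)$, contradicting consistency. Second, if $\varphi_n$ has the shape of the conclusion of an infinitary rule (say $\varphi_n = \Phi_{k,\boldsymbol{\uptheta},\mathbf{X}}(E_G\psi)$, or $\Phi_{k,\boldsymbol{\uptheta},\mathbf{X}}(C_G\psi)$, or $\Phi_{k,\boldsymbol{\uptheta},\mathbf{X}}(P_{i,\geq r}\psi)$, etc.) and we have added $\neg\varphi_n$, then I additionally add the negation of \emph{some} premise of the corresponding rule. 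The key lemma is that such a premise exists: for example, if for every $i\in G$ we had $T_n \cup \{\neg\Phi_{k,\boldsymbol{\uptheta},\mathbf{X}}(E_G\psi)\} \vdash \Phi_{k,\boldsymbol{\uptheta},\mathbf{X}}(K_i\psi)$, then applying RE would give $T_n \cup \{\neg\Phi_{k,\boldsymbol{\uptheta},\mathbf{X}}(E_G\psi)\} \vdash \Phi_{k,\boldsymbol{\uptheta},\mathbf{X}}(E_G\psi)$, contradicting consistency. Analogous arguments handle RC, RA, RPE, RPC.

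Finally I verify that $T^*$ has the required properties. Maximality follows by construction: every sentence or its negation is considered. The saturation (witness) condition is secured by the fresh-constant step. The main obstacle is showing that $T^*$ is \emph{consistent}, since a proof may be countably long and individual infinitary-rule applications can aggregate premises from infinitely many different $T_n$'s. I would argue by transfinite induction on proof length that $T^* \vdash \chi$ implies $\chi \in T^*$: the only delicate case is an application of an infinitary rule with all its premises in $T^*$, where I must show the conclusion $\chi$ also lies in $T^*$. If not, then $\neg \chi \in T^*$ by maximality, and by the construction above we would have added the negation of one of the premises at the step when $\chi$ was processed, contradicting the inductive hypothesis that all premises belong to $T^*$. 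Together with the fact that each finite-premise step preserves consistency, this gives consistency of $T^*$ and completes the proof.
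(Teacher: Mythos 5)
Your proposal is correct and follows essentially the same route as the paper's own proof: a Henkin enumeration with fresh witnesses for negated universals, the added twist of inserting the negation of a premise whenever the conclusion of an infinitary rule is rejected (justified exactly as you do, by applying the rule to derive a contradiction with consistency), and a final induction on proof length showing deductive closure, whose only delicate case is the infinitary rules. The paper's consistency argument is likewise ``$T^*$ is deductively closed, so $\bot\in T^*$ would force $\bot\in T_i$ for some $i$,'' matching your closing step.
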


\begin{proof}
	
	Let $\{ \varphi_i \, | \, i \in \mathbb N\}$ be an enumeration of all sentences in $Sent_{PCK^{fo}}$. Let $C$ be an infinite enumerable set of constant symbols such that $C \cap \mathcal L_{PCK^{fo}} = \emptyset$. We define the family  of  theories $(T_i)_{i\in \mathbb N}$, and the set $T^*$ in the following way:\\

	1. $T_0=T$.
	
	2.  For every  $i \in \mathbb N$:
	
	\enspace  a. if $T_i \cup \{ \varphi _i \}$ is consistent, then $T_{i+1}=T_i \cup \{ \varphi_i\}$
	
	\enspace  b. if $T_i \cup \{ \varphi _i \}$ is inconsistent, and
	
	\quad b1.  $\varphi_i= \Phi_{k, \boldsymbol{\uptheta}, \mathbf{X}}(E_G \varphi) $, then $T_{i+1}=T_i \cup \{\neg \varphi_i, \neg  \Phi_{k, \boldsymbol{\uptheta}, \mathbf{X}}(K_j \varphi) \}$, 
	
	\hspace{25pt}for some $j \in G$ such that $T_{i+1}$ is consistent
	
	\quad b2.  $\varphi_i=\Phi_{k, \boldsymbol{\uptheta}, \mathbf{X}}(C_G \varphi)$, then $T_{i+1}=T_i \cup \{\neg \varphi_i, \neg  \Phi_{k, \boldsymbol{\uptheta}, \mathbf{X}}((E_G)^m  \varphi) \}$,
	
	\hspace{25pt}for some $m \in \mathbb N$ such that $T_{i+1}$ is consistent
	
	\quad b3.  $\varphi_i= \Phi_{k, \boldsymbol{\uptheta}, \mathbf{X}}(E_G^r \varphi) $, then $T_{i+1}=T_i \cup \{\neg \varphi_i, \neg  \Phi_{k, \boldsymbol{\uptheta}, \mathbf{X}}(K_j^r \varphi) \}$,
	
	\hspace{25pt}for some $j \in G$ such that $T_{i+1}$ is consistent
	
	\quad  b4.  $\varphi_i=\Phi_{k, \boldsymbol{\uptheta}, \mathbf{X}}(C_G^r \varphi)$, then $T_{i+1}=T_i \cup  \{\neg \varphi_i, \neg  \Phi_{k, \boldsymbol{\uptheta}, \mathbf{X}}((F_G^r)^m \varphi) \}$, for some $m \in \mathbb N$ such that $T_{i+1}$ is consistent
	
	\quad b5.  $\varphi_i= \Phi_{k, \boldsymbol{\uptheta}, \mathbf{X}}(P_{i,\geq r }{\varphi})$, then $T_{i+1}=T_i \cup \{\neg \varphi_i, \neg  \Phi_{k, \boldsymbol{\uptheta}, \mathbf{X}}(P_{i,\geq r - \frac{1}{m}}{\varphi})
	\}, $ for some $m \in \mathbb N$ such that $T_{i+1}$ is consistent
	
	\quad b6.  $\varphi_i = (\forall x) \varphi(x)$, then $T_{i+1}=T_i \cup \{\neg \varphi_i, \neg \varphi(c)\}$  for some constant symbol $c \in C$ which doesn't occur in any of the formulas from $T_i$ such that $T_{i+1}$ remains consistent

	\enspace c. Otherwise, $T_{i+1}=T_i \cup \{\neg \varphi_i \}$.

	3. $T^* = \bigcup\limits_{i=0}^{\infty}  T_i$.\\
	
	First we need to prove that the set $T^*$ is well defined, i.e. we need to show that the agents $j \in G$ used the steps b1. and b3. exist, that the numbers $m \in \mathbb N$ used in the steps b2., b4. and b5. exist, and that the constant $c \in C$ from the step b6. exists.  Let us prove correctness in step b4. exists, i.e.,  that if $T_i \cup \{ \Phi_{k, \boldsymbol{\uptheta}, \mathbf{X}}(C_G^r \varphi)\}$ is inconsistent, then there exists $m \geq 1$ such that $T_i \cup \{ \neg \Phi_{k, \boldsymbol{\uptheta}, \mathbf{X}}((F_G^r)^m \varphi)\}$ is consistent. Otherwise, if  $T_i \cup \{\neg \Phi_{k, \boldsymbol{\uptheta}, \mathbf{X}}((F_G^b)^m \varphi)\}$ would be inconsistent for every $m$, then $T_i \vdash  \Phi_{k, \boldsymbol{\uptheta}, \mathbf{X}}((F_G^r)^m \varphi)$ for each $m$ by \hyperref[deduc]{Deduction theorem}, and therefore $T_i \vdash \Phi_{k, \boldsymbol{\uptheta}, \mathbf{X}}(C_G^r \varphi)$ by the inference rule RPC. But since $T_i \cup \{ \Phi_{k, \boldsymbol{\uptheta}, \mathbf{X}}(C_G^r \varphi)\}$ is inconsistent, we have $T_i \vdash \neg \Phi_{k, \boldsymbol{\uptheta}, \mathbf{X}}(C_G^r \varphi)$, which is in a contradiction with consistency of $T_i$. In a similar way we can prove existence of  $j$ and $m$ in the steps b1-b5., where the other infinitary rules are considered. Let us now consider the case b6. It is obvious that the formula  $\neg (\forall x) \varphi(x)$ can be consistently added  to $T_{i}$, and if there is already some $c \in C$ such that $\neg \varphi(c) \in T_i$, the proof is finished.
	If there is no such $c$, observe that $T_{i}$ is constructed by adding finitely many formulas to $T$, so  there is a constant symbol $c \in C$ which does not appear in $T_{i} $. Let us show that we can choose that $c$ in b6. If we suppose that $T_{i} \cup \{\neg (\forall x)\varphi(x), \neg \beta(c) \} \vdash \bot$, then by Deduction theorem we have 
	$T_i, \neg (\forall x)\varphi(x) \vdash \varphi(c)$.
	Note that $c$ does not appear in  $T_{i} \cup \{\neg (\forall x) \varphi(x) \}$, and therefore
	$T_i, \neg (\forall x) \varphi(x) \vdash (\forall x) \varphi(x)$, which is impossible.
	Thus, the sets $T_i$ are well defined. Note that they are consistent by construction.

	Next we prove that $T^*$   is  deductively closed, using the induction on the length of proof. The proof is straightforward in the case of finitary rules. Here we will only prove that $T^*$ is closed under the rule  RPC,   since  the cases when other infinitary rules are considered can be treated  in a similar way. 
	
	Suppose $T^{*} \vdash \phi$ was obtained by RPC, where $\Phi_{k, \boldsymbol{\uptheta}, \mathbf{X}}((F_G^r)^n \varphi) \in T^*$ for all $n \in \mathbb N$, and $\phi = \Phi_{k, \boldsymbol{\uptheta}, \mathbf{X}}(C_G^r \varphi)$.  Assume  that $\Phi_{k, \boldsymbol{\uptheta}, \mathbf{X}}(C_G^r \varphi ) \not \in T^*$. 
	Let $i$ be the positive integer such that $\varphi_i=\Phi_{k, \boldsymbol{\uptheta}, \mathbf{X}}(C_G^r \varphi)$. Then $T_i \cup \{\varphi_i\} $ is inconsistent, since otherwise $\Phi_{k, \boldsymbol{\uptheta}, \mathbf{X}}(C_G^r \varphi)=\varphi_i \in T_{i+1}\subset T^*$. Therefore $T_{i+1}=T_i \cup \{  \neg  
	\Phi_{k, \boldsymbol{\uptheta}, \mathbf{X}}((F_G^r)^m \varphi)  \}$ for some $m$, so $\neg \Phi_{k, \boldsymbol{\uptheta}, \mathbf{X}}((F_G^r)^m \varphi) \in  T^*$, which contradicts the consistency of $T_j$.
	
	If we would suppose $T$ is inconsistent, i.e.  $T^*\vdash \bot$, then we would have $\bot\in T^*$ since $T^*$ is deductively closed. Therefore, there would be some  $i$ such that $\bot\in T_i$, which is impossible. Thus, $T^*$ is consistent. 
	
	Finally, the step b6. of the construction guaranties that the theory $T^*$ is saturated in the language $\mathcal L^*$.
\end{proof}

\subsection{Canonical model}\label{sec canonical}

Now we construct a special Kripke structure, whose set of states consists of saturated theories.
First we need to introduce some notation. For a given set of formulas $T$ and $ i \in \mathcal A$, we define the set $T/K_i$ as the set of all formulas $\varphi$, such $K_i\varphi$ belongs to $T$, i.e.
$$T/K_i=\{\varphi \, | \, K_i\varphi \in T \}.$$

\begin{definition}[Canonical model]\label{def canonical}
	The \emph{canonical model} is the structure $M^{*} = (S,D,I, \mathcal{K}, \mathcal{P})$, such that
	
	\begin{itemize}
		\item  $S = \{s_V \ |\ V \mbox{ is a saturated theory}\}$
		\item $D$ is the set of all variable-free terms 
		\item $\mathcal K_i = \{ (s_V, s_U)\ | \  V/K_i \subseteq U \}$,  $\mathcal K = \{ \mathcal K_i \, | \, i \in \mathcal A\}$
		
		\item $I(s)$  is an interpretation such that:
		
		\begin{itemize}
			\item  for each function symbol $f^k_j$, $I(s)(f^k_j)$ is a function from $D^k$ to $D$ such that for all variable-free terms $t_1, \cdots,t_k$, $I(s)(f^k_j):$$ (t_1, \cdots,t_k)$$ \to f^k_j(t_1, \cdots,t_k)$
			
			\item for each relational symbol $R^k_j$,\\ $I(s)(R^k_j) = \{ (t_1, \cdots,t_k) \ | $ $ \ t_1, \cdots,t_k $  are variable-free terms in $ R^k_j(t_1, \cdots,t_k) \in V,$ where $s=s_V \}$
		\end{itemize}
		
		\item $\mathcal P (i,s) = (S_{i, s}, \chi_{i, s}, \mu_{i, s})$, where  
		
		\begin{itemize}
			\item $S_{i, s} = S$
			\item  $\chi_{i, s} = \{[\varphi]_{i, s} \, | \, \varphi \in Sent_{PCK}\}$, where $[\varphi]_{i, s} = \{ s_V \in S_{i, s} \, | \, \varphi \in V \}$ 
			\item   if $ [\varphi]_{i, s} \in \chi_{i, s}$ then $\mu _{i, s} ([\varphi]_{i, s}) = \sup \, \{r  \, | \, P_{i,\geq r}{\varphi} \in V$, where $s=s_V\}$  
		\end{itemize}
		
	\end{itemize}
	
\end{definition}

Note that the sets $[\varphi]_{i, s}$ in the definition of the canonical mode actually don't depend on $i$ and $s$, so in the rest o this section  we will sometimes relax the notation by  omitting  the subscript.

Also, since there is a bijection between saturated theories and states of canonical model, we will often write just  $s$ when we denote either a state of the corresponding saturated theory. For example, we can write the last item of the definition above as  $\mu _{i, s} ([\varphi]_{i, s}) = \sup \, \{r  \, | \, P_{i,\geq r}{\varphi} \in s\}$.

\smallskip

Now we will show that $M^*$ is a
$PCK^{fo}$ model.
First we need show that each $\mathcal P (i,s)$ defines a is a probability space. 
In specific, we prove that the definition of $\mu _{i, s} $ is correct, i.e., that $\mu _{i, s} ([\varphi]_{i, s})$ doesn't depend on the way we choose a sentence from the class $[\varphi]_{i, s}$.

\begin{lemma}\label{lemaPres} 
	Let $M^{*} = (S,D,I, \mathcal{K}, \mathcal{P})$ be the canonical model. Then for each agent $i \in \mathcal A$ and $s \in S$ the following hold.\footnote{The proof of Lemma \ref{lemaPres} is essentially the same as the proofs of corresponding statements in single-agent probability logics \cite{DBLP:books/sp/OgnjanovicRM16}. We present it here for the completeness of the paper, and also because some steps in the proof will be useful for the proof of Theorem \ref{thm con}.}
	
	\begin{enumerate}
		
		\item If $\varphi$ and $\psi$ are two sentences such that $[\varphi]_{i, s}=[\psi]_{i, s}$, then 
		$ \sup \, \{r  \, | \, P_{i,\geq r}{\psi} \in s\}=\sup \, \{r  \, | \, P_{i,\geq r}{\varphi} \in s\}$
		
		\item $\mathcal P (i,s)$ is a is a probability space.
	\end{enumerate}	
\end{lemma}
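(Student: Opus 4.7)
The plan for part (1) is to first establish a correspondence between canonical sets and provability: if $[\varphi]_{i,s}=[\psi]_{i,s}$, then $\vdash \varphi \leftrightarrow \psi$. I will argue by contraposition: if $\not\vdash \varphi \leftrightarrow \psi$, then $\{\neg(\varphi \leftrightarrow \psi)\}$ is consistent, and Lindenbaum's Theorem (Theorem \ref{lindenbaum}) extends it to a saturated theory $V$ that witnesses the symmetric difference of $[\varphi]_{i,s}$ and $[\psi]_{i,s}$, a contradiction. Having $\vdash \varphi \leftrightarrow \psi$, applying RP gives $\vdash P_{i,\geq 1}(\varphi \leftrightarrow \psi)$, and reusing the same P4/P5 manipulation that produced $\vdash P_{i,\geq 1}(\varphi\to\psi) \to (P_{i,\geq r}\varphi\to P_{i,\geq r}\psi)$ in the proof of $\vdash K_i^r\varphi \to K_i^r\psi$ in the previous section yields $\vdash P_{i,\geq r}\varphi \leftrightarrow P_{i,\geq r}\psi$ for every $r \in [0,1]_{\mathbb Q}$. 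Hence the two sets $\{r : P_{i,\geq r}\varphi \in V\}$ and $\{r : P_{i,\geq r}\psi \in V\}$ coincide, so their suprema agree.

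For part (2), I will verify the conditions of a finitely additive probability space in turn. The algebra axioms reduce to syntactic identities on $[\cdot]$-classes: $\emptyset=[\bot]$, $S_{i,s}=[\top]$, $S_{i,s}\setminus [\varphi]=[\neg\varphi]$ (by maximality of saturated theories), and $[\varphi]\cup[\psi]=[\varphi\vee\psi]$, all immediate from Lemma \ref{maxconsprop}. The range condition $\mu_{i,s}(A)\in[0,1]$ follows from axiom P1 (which places $P_{i,\geq 0}\varphi$ in every saturated theory) and from the weights in the language being drawn from $[0,1]_{\mathbb Q}$; normalization $\mu_{i,s}(S_{i,s})=1$ follows by applying RP to $\top$.

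The nontrivial step is finite additivity. Given disjoint $[\varphi]$ and $[\psi]$, a Lindenbaum argument shows that $\{\varphi,\psi\}$ is inconsistent, hence $\vdash\neg(\varphi\wedge\psi)$, and RP gives $P_{i,\geq 1}\neg(\varphi\wedge\psi)\in s$. Writing $r=\mu_{i,s}([\varphi])$ and $t=\mu_{i,s}([\psi])$, I will approach the equality $\mu_{i,s}([\varphi\vee\psi])=\min(1,r+t)$ from both sides by rational approximation. For the lower bound: for any rationals $r'<r$, $t'<t$, the definition of the supremum places $P_{i,\geq r'}\varphi, P_{i,\geq t'}\psi \in s$; axiom P4 then yields $P_{i,\geq \min(1,r'+t')}(\varphi\vee\psi)\in s$, and letting $r'\to r$, $t'\to t$ gives $\mu_{i,s}([\varphi\vee\psi])\geq \min(1,r+t)$. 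For the upper bound (assuming $r+t<1$, the clipped case being trivial): for any rationals $r'>r$, $t'>t$ with $r'+t'\leq 1$, $P_{i,\geq r'}\varphi\notin s$ gives $P_{i,<r'}\varphi\in s$, whence $P_{i,\leq r'}\varphi\in s$ by P3; combined with $P_{i,<t'}\psi\in s$, axiom P5 yields $P_{i,<r'+t'}(\varphi\vee\psi)\in s$, so $\mu_{i,s}([\varphi\vee\psi])\leq r'+t'$, and passing $r'\to r$, $t'\to t$ gives $\mu_{i,s}([\varphi\vee\psi])\leq r+t$.

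The main technical obstacle will be this reverse inequality for additivity: it demands careful juggling of strict versus non-strict rational bounds to accommodate the asymmetric form of P5, and a separate treatment of the clipping case $r+t\geq 1$. Everything else amounts to a dictionary between Boolean operations on $[\varphi]$-classes and syntactic operations on sentences, mediated by maximal consistency and Lindenbaum's Theorem.
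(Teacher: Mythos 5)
Your proposal is correct and follows essentially the same route as the paper's proof: part (1) via $\vdash\varphi\leftrightarrow\psi$, RP, and the derived theorem $P_{i,\geq 1}(\varphi\to\psi)\to(P_{i,\geq r}\varphi\to P_{i,\geq r}\psi)$; part (2) via the syntactic algebra identities, then P4 for the lower bound on $\mu_{i,s}([\varphi\vee\psi])$ and P3/P5 for the upper bound, both by rational approximation of the suprema. If anything, you are slightly more explicit than the paper in deriving the P4 premise $P_{i,\geq 1}\neg(\varphi\wedge\psi)$ from the disjointness of $[\varphi]$ and $[\psi]$ via a Lindenbaum argument, a step the paper leaves implicit.
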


\begin{proof}\hfill 	
	
	\begin{enumerate}
		
		\item If $[\varphi]_{i, s}=[\psi]_{i, s}$, then $\varphi$ and $\psi$ belong to the same saturated theories, so $\vdash \varphi\equiv \psi$. From $\vdash \varphi \to \psi$ we obtain $\vdash P_{i,\geq 1} (\varphi \to \psi)$ by RA, and therefore for every $r$ we have $\vdash P_{i,\geq r} \varphi \to P_{i,\geq r} \psi$ by (\ref{prob k}). Consequently, $P_{i,\geq r} \varphi \to P_{i,\geq r} \psi\in s$.
		If $P_{i,\geq r} \varphi \in s$ then, by Lemma \ref{maxconsprop}(\ref{imp}), also $P_{i,\geq r} \psi \in s$. Therefore, $ \sup \, \{r  \, | \, P_{i,\geq r}{\psi} \in s\}\geq \sup \, \{r  \, | \, P_{i,\geq r}{\varphi} \in s\}$. In the same way we can prove $ \sup \, \{r  \, | \, P_{i,\geq r}{\psi} \in s\}\leq \sup \, \{r  \, | \, P_{i,\geq r}{\varphi} \in s\}$ using 
		$\vdash \psi \to \varphi$.
		
		\item First we show that for each agent $i \in \mathcal A$ and $s \in S$, the class  $\chi_{i, s} = \{[\varphi] \, | \, \varphi \in Sent_{PCK_\infty}\}$ is an algebra of subsets of $S_{i, s}$.
		Obviously, we have that $S_{i, s} = [\varphi \lor \neg \varphi]$, for every formula $\varphi$.
		Also, if $[\varphi] \in \chi_{i, s}$, then $[\neg \varphi]$ is a complement of the set     $[\varphi]$, and it belongs to $\chi_{i, s}$
		Finally,  if $[\varphi_1], [\varphi_2] \in \chi_{i, s}$, then $[\varphi_1] \cup  [\varphi_2] \in \chi_{i, s}$ because $[\varphi_1] \cup  [\varphi_2] = [\varphi_1 \lor \varphi_2].$ 
		Therefore,  each  $\chi_{i, s}$ is an algebra of subsets of $S_{i, s}$.
		
		Note that from the axiom  $P_{i,\geq o} \varphi$ we can obtain  $\mu _{i, s}([\varphi]) \geq 0$. 
		Next we show $\mu _{i, s}([\varphi]) = 1- \mu _{i, s}([\neg \varphi])$. 
		Suppose $q = \mu _{i, s} ([\varphi]) = sup \, \{r  \, | \, P_{i,\geq r}{\varphi} \in s\}$. If $q=1$, then $P_{i,\geq r}{\varphi} = P_{i,\leq 0}\neg \varphi = \neg P_{i,>0} \neg \varphi$ and $\neg P_{i,>0} \neg \varphi \in s$. If for some $l>0$, $P_{i,\geq l} \neg \varphi \in s$ then $ P_{i,> 0}\neg \varphi \in s$, by axiom P2, which is a contradiction. Therefore, $\mu _{i, s} ([\varphi])=1$. Suppose $q<1$. Then for every rational number $q' \in (q,1]$, $\neg P_{i,\geq q'} \varphi = P_{i,< q'} \varphi$, so $P_{i,< q'} \varphi \in s$. Then by P2, $P_{i,\leq q'} \varphi$ and $P_{i,\geq 1-q'} \neg \varphi \in s$. On the other hand, if there is a rational $q'' \in [0,r)$ such that $P_{i,\geq 1-q''} \neg \varphi \in s$, then $\neg P_{i,>q''} \in s$, which is a contradiction. Therefore, $sup \, \{r  \, | \, P_{i,\geq r}{\neg \varphi} \in s\} = 1- sup \, \{r  \, | \, P_{i,\geq r}{\varphi} \in s\}$. Thus, $\mu _{i, s}([\varphi]) = 1- \mu _{i, s}([\neg \varphi])$.
		Let $[\varphi]_{i, s} \cap [\psi]_{i, s} = \emptyset$, $\mu _{i, s}([\varphi]) =q$, $\mu _{i, s}([\psi]) =l$. Since $[\psi]_{i, s} \subset [\neg \varphi]_{i, s}$, it follows that $q+l \leq q+(1-q)=1$. Suppose that $q,l>0$. Because of supremum and monotonicity properties, for all rational numbers $q'\in [0,q)$ and $l'\in [0,l)$: $P_{i,\geq q'} \varphi$, $P_{i,\geq l'} \psi \in s$. Then  $P_{i,\geq q'+l'} (\varphi \lor \psi) \in s$ by P4. Therefore, $q+l \leq sup \, \{r  \, | \, P_{i,\geq r}{(\varphi \lor \psi)} \in s\}$. If $q+l=1$, the statement is obviously valid. Suppose $q+l<1$. If  $q+l <r_0= \leq sup \, \{r  \, | \, P_{i,\geq r}{(\varphi \lor \psi)} \in s\}$, then for each rational $r' \in (q+l, r_0), P_{i,\geq r'}{(\varphi \lor \psi)} \in s $ . Let us choose rational $q''>q$ and $s''>s$ such that $\neg P_{i,\geq q''} \varphi, P_{i,< q''}\varphi \in s$, $\neg P_{i,\geq l''} \psi, P_{i,< l''}\psi \in s$ and $q''+l''=r'\leq1$. Then $P_{i, \leq q''}\varphi  \in s$ by the axiom P3. And by P5 we have $P_{i,\leq q''+'l''}(\varphi \lor \psi)$, $\neg P_{i,\geq q''+'l''}(\varphi \lor \psi)$ and $\neg P_{i,\geq r'}(\varphi \lor \psi)$, which is a contradiction. Therefore  $\mu _{i, s}([\varphi] \cup [\psi]) = \mu _{i, s}([\varphi]) + \mu _{i, s}([\psi])$. Finally, let us assume that $q=0$ or $l=0$. In that case we can repeat the previous reasoning, by taking either $q'=0$ or $l'=0$.
	\end{enumerate}
\end{proof}

The previous result still doesn't ensure that $M^*$  belongs to the class $\mathcal{M}^{MEAS}_\mathcal{A}$. Indeed, in Definition \ref{def canonical} the sets $[\varphi]$ from $\chi_{i,s}$  are defined using $\varphi\in T$, and not $(M^*,s_T)\models \varphi$. However, the following lemma shows that the former and later coincide.

\begin{lemma}[Truth lemma]\label{Truth} Let $T$ be a saturated theory. Then 
	$$\varphi \in T \ \  \mbox{ iff } \ \ (M^{*}, s_{T}) \models \varphi.$$\end{lemma}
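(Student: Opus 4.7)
The plan is to proceed by induction on the complexity of the sentence $\varphi$, using a lexicographic measure on the pair $(c(\varphi), \ell(\varphi))$ where $c(\varphi)$ counts occurrences of $C_G$ and $C_G^r$ operators and $\ell(\varphi)$ is the total length; this is needed for the common-knowledge cases, in which $C_G\varphi$ will be reduced to sentences $(E_G)^m\varphi$ that have the same inner $c$-count but larger length. The base case of atomic sentences $R^k_j(t_1,\ldots,t_k)$ is immediate from the definition of $I(s_T)(R^k_j)$ in the canonical model, and the Boolean cases follow from Lemma \ref{maxconsprop}. For $(\forall x)\varphi(x)$, one direction uses axiom FO2: if $(\forall x)\varphi(x) \in T$ then $\varphi(t) \in T$ for every variable-free term $t$, and since $D$ consists of such terms, the induction hypothesis yields $(M^*, s_T) \models \varphi(t)$ for every $t \in D$. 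The converse uses saturation of $T$: if $(\forall x)\varphi(x) \notin T$ then $\neg(\forall x)\varphi(x) \in T$ and saturation supplies a term $c$ with $\neg\varphi(c) \in T$, so the induction hypothesis gives $(M^*, s_T) \not\models \varphi(c)$.

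The main obstacle is the $K_i\varphi$ case, specifically its nontrivial direction. The easy direction is routine: if $K_i\varphi \in T$, then $\varphi \in T/K_i \subseteq U$ for every $s_U \in \mathcal K_i(s_T)$, so the induction hypothesis yields $(M^*, s_U) \models \varphi$. For the converse, suppose $K_i\varphi \notin T$; I must produce a saturated $U$ with $T/K_i \subseteq U$ and $\varphi \notin U$. The key step is to show that $T/K_i \cup \{\neg\varphi\}$ is consistent: if not, then $T/K_i \vdash \varphi$, so by Strong Necessitation (Theorem \ref{thm strong necessitation}) one has $K_i(T/K_i) \vdash K_i\varphi$, and since $K_i(T/K_i) \subseteq T$ by the very definition of $T/K_i$, this gives $T \vdash K_i\varphi$, contradicting $K_i\varphi \notin T$. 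Lindenbaum's theorem (Theorem \ref{lindenbaum}) then extends $T/K_i \cup \{\neg\varphi\}$ to a saturated theory $U$; the induction hypothesis gives $(M^*, s_U) \not\models \varphi$, hence $(M^*, s_T) \not\models K_i\varphi$.

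The group and common knowledge cases reduce to the $K_i$ case via Lemma \ref{max}: Lemma \ref{max}(1) matches $E_G\varphi \in T$ with $K_i\varphi \in T$ for all $i \in G$, while Lemma \ref{max}(3) matches $C_G\varphi \in T$ with $(E_G)^m\varphi \in T$ for every $m \in \mathbb N$, where each $(E_G)^m\varphi$ has strictly smaller $c$-complexity so that the induction hypothesis applies. The probabilistic analogues $E_G^r\varphi$ and $C_G^r\varphi$ are handled analogously via Lemma \ref{max}(2) and (4), recalling that $K_i^r\varphi$ abbreviates $K_i(P_{i,\geq r}\varphi)$. For the probabilistic atom $P_{i,\geq r}\varphi$, the induction hypothesis identifies the canonical set $[\varphi]_{i,s_T}$ with the semantic set $\{s \in S \,|\, (M^*, s) \models \varphi\}$, and Definition \ref{def canonical} gives $\mu_{i,s_T}([\varphi]) = \sup\{q \,|\, P_{i,\geq q}\varphi \in T\}$; the biconditional $P_{i,\geq r}\varphi \in T$ iff $\mu_{i,s_T}([\varphi]) \geq r$ then follows from Lemma \ref{maxconsprop}(5) together with the derivable monotonicity $\vdash P_{i,\geq q}\varphi \to P_{i,\geq r}\varphi$ for rational $q \geq r$ (obtainable from P2 and P3). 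This last step also confirms, simultaneously with the proof, that the canonical model lies in $\mathcal M_{\mathcal A}^{MEAS}$, since the sets $[\varphi]$ appearing semantically are exactly those declared measurable in Definition \ref{def canonical}.
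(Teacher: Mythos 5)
Your proposal is correct and follows essentially the same route as the paper's proof: the nontrivial $K_i$ direction via consistency of $T/K_i \cup \{\neg\varphi\}$, Strong Necessitation and Lindenbaum's theorem; the group/common-knowledge cases via Lemma \ref{max}; and the $P_{i,\geq r}$ case via Lemma \ref{maxconsprop}(5) together with monotonicity. The one caveat is your induction measure: the count $c(\varphi)$ of $C$-operators does not strictly decrease in passing from $C_G^r\varphi$ to $(F_G^r)^m\varphi$, since $(F_G^r)^m\varphi$ contains $m$ copies of $\varphi$ and so can have $C$-count $m\cdot c(\varphi) \geq 1+c(\varphi)$ (and likewise $E_G\varphi\mapsto K_i\varphi$ leaves $(c,\ell)$ unchanged); replacing the count by the $C$-nesting depth, with a suitably weighted length as tiebreaker, repairs this --- a bookkeeping point the paper itself leaves implicit under ``induction on complexity.''
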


\begin{proof} 	
	
	We prove the equivalence  by induction on complexity of $\varphi$:
	
	\smallskip
	
	- If the formula $\varphi$ is atomic, then  $\varphi \in T   \mbox{ iff }  (M^{*}, s_{T}) \models \varphi$, by the definition of $I(s)$ in $M^{*}$.
	
	- Let $\varphi = \neg \psi$. Then $ (M^{*}, s_{T}) \models \neg \psi$ iff $ (M^{*}, s_{T}) \not \models \psi$ iff $\psi \not \in T$ (induction hypothesis) iff $\neg \psi \in T$.
	
	- Let $\varphi = \psi \wedge \eta$. Then  $ (M^{*}, s_{T}) \models \psi \wedge \eta$ iff $ (M^{*}, s_{T}) \models \psi $ and $ (M^{*}, s_{T}) \models \eta$  iff $\psi \in T$ and $\eta \in T$  (induction hypothesis) iff  $\psi \wedge \eta \in T$ by Lemma \ref{maxconsprop}(3).
	
	- Let $\varphi = (\forall x) \psi$ and $\varphi \in T$. Then $\psi(t/x)$ for all $t \in D$ by FO2. It follows that $ (M^{*}, s_{T}) \models \psi(t/x) $ for all $t \in D$ by induction hypothesis, and therefore $(M^{*}, s_{T}) \models (\forall x) \psi$. In other direction, let $(M^{*}, s_{T}) \models (\forall x) \psi$ and assume the opposite ie. $\varphi = (\forall x) \psi \not \in T$. Then there exists some term $t \in D$ such that $ (M^{*}, s_{T}) \models \neg \psi(t/x) $ ( $T$ is saturated), leading to a contradiction $(M^{*}, s_{T}) \not \models (\forall x) \psi$.

	-Let $\varphi = P_{i,\geq r} \psi$. If $\varphi \in T$ then $\sup \, \{q  \, | \, P_{i,\geq q}{\psi} \in T\} = \mu _{i, s_T}([\psi]) \geq r$ and $ (M^{*}, s_{T}) \models P_{i,\geq r} \psi$. In other direction, let  $ (M^{*}, s_{T}) \models P_{i,\geq r} \psi$, i.e.,  $\sup \, \{q  \, | \, P_{i,\geq q}{\psi} \in T\}  \geq r$. If $\mu _{i, s_T}([\psi])>r$, then $P_{i,\geq r} \psi \in T$ because of the properties of supremum and monotonicity of the probability measure $\mu _{i, s_T}$.	 If $\mu _{i, s_T}([\psi])=r$ then $P_{i,\geq r} \psi \in T$ by Lemma \ref{maxconsprop}(5).

	- Suppose $\varphi = K_i  \psi$. Let $K_i  \psi \in T$. Since $\psi \in T/K_i$, then $\psi \in U$ for every $U$ such that $s_T \mathcal K_i  s_U$ (by the definition of $\mathcal K_i$). Therefore $(M^{*}, s_U)\models \psi$ by induction hypothesis ($\psi$ is subformula of $K_i \psi$), and then $(M^{*}, s_T)\models K_i  \psi$.
	
	Let $(M^{*}, s_T)\models K_i  \psi $. Assume  the opposite, that $K_i  \psi \not \in T$. Then $T/K_i \cup \{\neg \psi\}$ must be consistent. If it wouldn't be consistent, then  $T/K_i  \vdash \psi$ by \hyperref[deduc]{Deduction theorem} and $T \supset K_i(T/K_i) \vdash K_i \psi $ by Theorem \ref{necess}, ie. $ K_i  \psi \in T$, which is a contradiction. Therefore $ T/K_i  \cup \{\neg \psi\}$ can be extended to a maximal consistent $U$, so $s_T \mathcal K_i  s_U$. Since $\neg \psi \in U$,  then $(M^{*}, s_U)\models \neg \psi$ by induction hypothesis, so we get the contradiction $(M^{*}, s_T) \not \models_{M^{*}} K_i  \psi$.

	- Observe that  $\varphi =E_G \psi \in T $ iff $K_i \psi \in T$ $\, \mbox{ for all }  i \in G$ (by  Lemma \ref{max}(1)) iff $(M^{*}, s_T)\models K_i  \psi $ $ \mbox{ for all }  i \in G$ (by previous case) ie. $(M^{*}, s_T)\models E_G \psi $ (by the definition of $\models$ relation). 
	
	- $\varphi =C_G \psi \in T $ iff  $(E_G)^m \psi \in T$ $\, \mbox{ for all }  m \in \mathbb N$ (by  Lemma \ref{max}(3)) iff $(M^{*}, s_T)\models(E_G)^m \psi $ $\, \mbox{ for all }  m \in \mathbb N$ (by previous case) ie. $(M^{*}, s_T)\models C_G \psi $.
	
	- $\varphi =E_G^r \psi \in T $ iff $K_i^r  \psi = K_i(P_{i,\geq r}{\psi}) \in T$ $\, \mbox{ for all }  i \in G$ (by  Lemma \ref{max}(2)) iff $(M^{*}, s_T)\models K_i(P_{i,\geq r}{\psi})  $ (by the previous case  $\varphi = K_i  \psi$), i.e., $(M^{*}, s_T)\models E_G^r \psi$.
	
	- Let $\varphi =(F_G^r)^m \psi $. Since $(F_G^r)^0 \psi = \top$, the claim holds trivially. Also $\varphi =(F_G^r)^{m+1} \psi = E_G^r(\psi \wedge (F_G^r)^{m} \psi) \in T$ iff  $(M^{*}, s_T)\models E_G^r(\psi \wedge (F_G^r)^{m} \psi)$ (by the previous case) ie. $(M^{*}, s_T)\models (F_G^r)^{m+1} \psi  $, $m \in \mathbb N$.
	
	- $\varphi =C_G^r \psi \in T $ iff  $(F_G^r)^m \psi \in T$ $\, \mbox{ for all }  m \in \mathbb N$ (by  Lemma \ref{max}(4)) iff $(M^{*}, s_T)\models(F_G^r)^m \psi $ $\, \mbox{ for all }  m \in \mathbb N$ (by the previous case), i.e., $(M^{*}, s_T)\models C_G \psi $.

\end{proof}

From Lemma \ref{lemaPres} and Lemma \ref{Truth} we immediately obtain the following corollary.

\begin{theorem}\label{measurable M*}
	$M^*\in \mathcal{M}^{MEAS}_\mathcal{A}$.
\end{theorem}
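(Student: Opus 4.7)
The plan is to reduce the measurability condition for an arbitrary formula under a valuation to the membership condition for a sentence in a saturated theory, and then invoke the definition of $\chi_{i,s}$ together with the Truth lemma.

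Fix an agent $i \in \mathcal A$, a state $s_T \in S$, a formula $\varphi \in For_{PCK^{fo}}$ with free variables among $x_1, \ldots, x_n$, and a valuation $v: Var \to D$. Since $D$ consists of variable-free terms, $v(x_1), \ldots, v(x_n)$ are variable-free terms; let $\varphi^{v}$ denote the sentence $\varphi(v(x_1)/x_1, \ldots, v(x_n)/x_n)$ obtained by simultaneously substituting each free $x_j$ by $v(x_j)$. The first step is to establish the substitution identity
\[
(M^*, s_U, v) \models \varphi \quad \text{iff} \quad (M^*, s_U) \models \varphi^{v},
\]
for every $s_U \in S$. This is a routine induction on the complexity of $\varphi$: the rigidity of terms in the canonical model (terms are interpreted uniformly across states by themselves) makes the atomic case immediate, Boolean and quantifier cases are standard, and the epistemic and probabilistic modal cases reduce straightforwardly to the inductive hypothesis applied at the relevant worlds in $\mathcal K_i(s_U)$ and $S_{i,s_U}$.

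Given this identity, for each $s_T \in S$ we obtain
\[
[\varphi]_{i, s_T}^{v} \;=\; \{s_U \in S_{i,s_T} \mid (M^*, s_U, v) \models \varphi\} \;=\; \{s_U \in S_{i,s_T} \mid (M^*, s_U) \models \varphi^{v}\}.
\]
By the Truth lemma (Lemma \ref{Truth}) applied to the saturated theory $U$, the last set equals $\{s_U \in S_{i,s_T} \mid \varphi^{v} \in U\} = [\varphi^{v}]_{i, s_T}$. Since $\varphi^{v}$ is a sentence, Definition \ref{def canonical} gives $[\varphi^{v}]_{i, s_T} \in \chi_{i, s_T}$, and therefore $[\varphi]_{i, s_T}^{v} \in \chi_{i, s_T}$, as required.

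The only non-routine step is the substitution identity; everything else follows directly from the Truth lemma and from reading off the definition of $\chi_{i,s}$ in the canonical model. I expect the main (still very mild) obstacle to be verifying that the substitution identity interacts correctly with the probability operators, which reduces to the fact that the truth sets $[\psi]$ used to evaluate $P_{i,\geq r}\psi$ coincide whether one substitutes inside $\psi$ or adjusts the valuation — a consequence of the rigid interpretation of terms in $M^*$.
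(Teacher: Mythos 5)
Your proposal is correct and follows essentially the same route as the paper, which derives the theorem from the Truth lemma together with the definition of $\chi_{i,s}$ in the canonical model; in fact you make explicit the one step the paper leaves implicit, namely the reduction of an arbitrary formula under a valuation to a sentence via substitution of the variable-free terms $v(x_j)$ (legitimate here since such terms are automatically free for $x_j$ and are interpreted as themselves). The only omission is that membership in $\mathcal{M}^{MEAS}_\mathcal{A}$ also presupposes that $M^*$ is a $PCK^{fo}$ model, i.e.\ that each $\mathcal P(i,s)$ is a genuine probability space; this is Lemma \ref{lemaPres}, which the paper cites alongside the Truth lemma and which you should invoke as well.
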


\subsection{Completeness theorem}
Now we state the main result of this paper.
In the following theorem, we summarize the results obtained above in order to prove the strong completeness of our axiomatic system for the class of measurable models.

\begin{theorem}[Strong completeness theorem]\label{thm compl} A theory $T$ is consistent if and only if it is satisfiable in an $\mathcal{M}^{MEAS}_\mathcal{A}-$model. \end{theorem}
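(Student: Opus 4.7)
The plan is to prove both directions, with the easy direction coming from soundness and the substantive direction coming from the Henkin-style machinery already developed in Sections 5.1 and 5.2.

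For the direction ``satisfiable implies consistent'', I would argue by contrapositive. Suppose $T$ is inconsistent, so $T\vdash \bot$. If $T$ were satisfied in some $(M,s)$ with $M\in \mathcal{M}^{MEAS}_\mathcal{A}$, then by the Soundness Theorem every formula derivable from $T$ (in particular $\bot$) would be satisfied at $(M,s)$, which is impossible. Note that one must be slightly careful here because the definition of derivability restricts the premises of the necessitation-style rules RK and RP to theorems; the standard induction on the length of a proof from $T$ nonetheless shows that $(M,s)\models\varphi$ whenever $T\vdash\varphi$, since theorems are valid by Soundness and the remaining axioms/rules preserve satisfaction at a fixed pointed model.

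For the nontrivial direction, let $T$ be a consistent theory in $\mathcal L_{PCK^{fo}}$. First I would apply Lindenbaum's Theorem (Theorem \ref{lindenbaum}) to fix an infinite enumerable set of fresh constants $C$ and extend $T$ to a saturated theory $T^*$ in the expanded language $\mathcal L^*= \mathcal L_{PCK^{fo}}\cup C$. Next, I would form the canonical model $M^*=(S,D,I,\mathcal K,\mathcal P)$ of Definition \ref{def canonical} over this enlarged language, taking as a distinguished state $s_{T^*}\in S$ the point associated to the saturated extension $T^*$. Theorem \ref{measurable M*} guarantees $M^*\in \mathcal{M}^{MEAS}_\mathcal{A}$, so $\mathcal P(i,s)$ is a genuine finitely additive probability space for every agent $i$ and state $s$, and all formula classes $[\varphi]_{i,s}$ are measurable. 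Applying the Truth Lemma (Lemma \ref{Truth}) to every sentence $\varphi\in T\subseteq T^*$ yields $(M^*,s_{T^*})\models \varphi$, hence $T$ is satisfied at $s_{T^*}$.

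The one point requiring a brief justification is that passing to the richer vocabulary $\mathcal L^*$ does not affect the conclusion: the sentences of the original theory $T$ belong to $\mathcal L_{PCK^{fo}}\subseteq \mathcal L^*$, and the canonical model $M^*$ interprets both old and new symbols on the same domain $D$ of variable-free terms, so the satisfaction of any $\varphi\in T$ at $s_{T^*}$ inside $M^*$ is exactly satisfaction in an $\mathcal{M}^{MEAS}_\mathcal{A}$-model in the sense of Definition \ref{def model}. I expect the main technical obstacle to have been the construction and verification of $M^*$, but all of that work (the existence of suitable witnesses in the Lindenbaum construction, the verification that $\mu_{i,s}$ is well-defined and finitely additive, and the inductive proof of the Truth Lemma across the modal, probabilistic, and quantifier cases) is already carried out in Theorems \ref{lindenbaum}, Lemma \ref{lemaPres}, Lemma \ref{Truth}, and Theorem \ref{measurable M*}. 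Thus the completeness theorem itself reduces to assembling these components in the two paragraphs above.
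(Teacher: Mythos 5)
Your proof is correct and follows essentially the same route as the paper: the right-to-left direction from Soundness, and the left-to-right direction by extending $T$ to a saturated $T^*$ via Lindenbaum's theorem, invoking Theorem \ref{measurable M*} to place the canonical model $M^*$ in $\mathcal{M}^{MEAS}_\mathcal{A}$, and applying the Truth lemma at $s_{T^*}$. Your additional remarks on the restricted necessitation rules and on the passage to the expanded language $\mathcal L^*$ are sensible clarifications of points the paper leaves implicit, but they do not change the argument.
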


\begin{proof}  
	The direction from right to left is a consequence of Soundness theorem. 
	For the other direction, suppose that $T$ is a consistent theory. We will show that $T$ is satisfiable in the canonical model $M^*$, which belongs to $\mathcal{M}^{MEAS}_\mathcal{A}$, by Theorem \ref{measurable M*}. By Theorem \ref{lindenbaum}, $T$ can be extended to a saturated theory $T^*$. 
	From Lemma \ref{Truth} we have that $\varphi \in V   \mbox{ iff }  (M^{*}, s_{V}) \models \varphi$, for every saturated theory $V$. Consequently, 
	$(M^*, s_{T^*}) \models \varphi$, for every $\varphi \in T^*$, and therefore    $(M^*, s_{T^*}) \models T$.
\end{proof}

\section{Adding the consistency condition}\label{sec con}

In the logic $PCK^{fo}$ presented in this paper, we proposed the most general case, where no relationship is posed between the modalities for knowledge and probability. Indeed, in the definition of the probability spaces $\mathcal P (i,s) = (S_{i, s}, \chi_{i, s}, \mu_{i, s})$ the sample space of possible events $S_{i,s}$ is an arbitrary nonempty subset of the set of all states $S$.

Now we consider a natural additional assumption, called \emph{consistency condition} in \cite{KP}, which forbids an agent to place a positive probability to the event she knows to be false. This assumption can be semantically captured by adding the condition $S_{i,s} \subseteq \mathcal K_i(s)$ to Definition \ref{def model}. In the following definition we introduce the corresponding subclass of measurable models $\mathcal M_{\mathcal A}^{MEAS,CON}$.

\begin{definition}
	$\mathcal M_{\mathcal A}^{MEAS,CON}$ is the class of all measurable models $M=(S,D,I,\mathcal  K, \mathcal  P)\in\mathcal M_{\mathcal A}^{MEAS}$, such that 
	$$
	S_{i,s} \subseteq \mathcal K_i(s)
	$$
	for all $i$ and $s$, where $\mathcal P (i,s) = (S_{i, s}, \chi_{i, s}, \mu_{i, s})$.
\end{definition}

\noindent We will prove that adding the axiom\footnote{This type of axiom is standard in logics in  which probability is seen as an approximation of other modalities; for example, in probabilistic temporal logic, the axiom $G\varphi \to P_{ \geq 1}{\varphi}$ (``if  $\varphi$ always holds, then its probability is equal to 1'') is a part of axiomatization \cite{DBLP:journals/logcom/Ognjanovic06}.}\\

\hspace{4ex}CON. $K_i \varphi \to P_{i, \geq 1}{\varphi}$\\

\noindent to our axiomatization results with a system which is complete for the class of models  $\mathcal M_{\mathcal A}^{MEAS,CON}$. Note that in that case we can remove Probabilistic Necessitation from the list of inference rules since, in presence of CON, it is derivable from Knowledge Necessitation. Indeed, the applications of the rules RK and RP are restricted to theorems only, so if $\vdash \varphi$, then $\vdash K_i \varphi$ by RK, and $\vdash P_{i, \geq 1}{\varphi}$ by CON. Thus,
$ \dfrac{ \varphi}{ P_{i, \geq 1}{\varphi} }$  
is derivable rule in the axiomatic system that  we propose in  the following definition.

\begin{definition}
	The axiomatization $Ax_{PCK^{fo}}^{CON}$  consists of all the axiom schemata and inference rules from $Ax_{PCK^{fo}}$ except RP and, in addition, it contains the axiom CON.
\end{definition}

The proposed axiomatic system is complete for the class of models $\mathcal M_{\mathcal A}^{MEAS,CON}$.

\begin{theorem}\label{thm con}
	The axiomatization  $Ax_{PCK^{fo}}^{CON}$ is strongly complete for the class of models $\mathcal M_{\mathcal A}^{MEAS,CON}$.
\end{theorem}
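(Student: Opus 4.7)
The plan is to mirror the completeness argument of Theorem \ref{thm compl}, adjusting the canonical model only where the new constraint $S_{i,s}\subseteq \mathcal K_i(s)$ bites. First I would dispatch soundness: all axioms and rules of $Ax_{PCK^{fo}}^{CON}$ other than CON are already sound with respect to the larger class $\mathcal M^{MEAS}_{\mathcal A}$, and the absence of RP is harmless since CON together with RK derives it (as noted before Definition 6.2). For CON itself: if $(M,s,v)\models K_i\varphi$ with $M\in \mathcal M^{MEAS,CON}_{\mathcal A}$, then $\varphi$ holds at every $t\in\mathcal K_i(s)$; because $S_{i,s}\subseteq\mathcal K_i(s)$, $[\varphi]^v_{i,s}=S_{i,s}$ and $\mu_{i,s}([\varphi])=1$, so $(M,s,v)\models P_{i,\geq 1}\varphi$.

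For completeness I would reuse Lindenbaum's theorem (Theorem \ref{lindenbaum}) verbatim, since every infinitary rule used in its proof (RE, RC, RPE, RPC, RA) remains in $Ax_{PCK^{fo}}^{CON}$. Then I would build the canonical model exactly as in Definition \ref{def canonical} except that I replace the sample space by $S_{i,s_T}=\mathcal K_i(s_T)=\{s_U\mid T/K_i\subseteq U\}$ and set $[\varphi]_{i,s_T}=\{s_U\in\mathcal K_i(s_T)\mid \varphi\in U\}$, with $\mu_{i,s_T}([\varphi]_{i,s_T})=\sup\{r\mid P_{i,\geq r}\varphi\in T\}$. The inclusion $S_{i,s}\subseteq\mathcal K_i(s)$ then holds by fiat, and the Boolean, quantifier, knowledge, group-knowledge and common-knowledge cases of Lemma \ref{Truth} carry over without change.

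The main obstacle, and the only place where CON is essential, is re-proving Lemma \ref{lemaPres} and the probabilistic case of the Truth Lemma with the smaller sample space. The crux is well-definedness: if $[\varphi]_{i,s_T}=[\psi]_{i,s_T}$ as subsets of $\mathcal K_i(s_T)$, then I need $\sup\{r\mid P_{i,\geq r}\varphi\in T\}=\sup\{r\mid P_{i,\geq r}\psi\in T\}$. Here the previous argument (which derived $\vdash \varphi\equiv\psi$ from equality over \emph{all} states) no longer applies. Instead I would argue: the set equality says $\varphi\in U\Leftrightarrow\psi\in U$ for every saturated $U\supseteq T/K_i$, so a standard Lindenbaum argument gives $T/K_i\vdash \varphi\leftrightarrow\psi$. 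By Strong Necessitation (Theorem \ref{thm strong necessitation}), $K_i(T/K_i)\vdash K_i(\varphi\leftrightarrow\psi)$, and since $K_i(T/K_i)\subseteq T$ by the definition of $T/K_i$, we get $T\vdash K_i(\varphi\leftrightarrow\psi)$. Now CON gives $T\vdash P_{i,\geq 1}(\varphi\leftrightarrow\psi)$, and the derivation underlying (\ref{prob k}) in the paper then yields $T\vdash P_{i,\geq r}\varphi\leftrightarrow P_{i,\geq r}\psi$ for every $r$, hence the two suprema coincide.

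With well-definedness secured, the remaining clauses of Lemma \ref{lemaPres} (that $\chi_{i,s_T}$ is an algebra, $\mu_{i,s_T}(S_{i,s_T})=1$ from $\vdash P_{i,\geq 1}\top$, nonnegativity, finite additivity via P4 and P5) go through exactly as before. The probabilistic case of the Truth Lemma then reads: $(M^{*},s_T)\models P_{i,\geq r}\varphi$ iff $\mu_{i,s_T}(\{s_U\in\mathcal K_i(s_T)\mid (M^{*},s_U)\models \varphi\})\geq r$, which by the inductive hypothesis equals $\mu_{i,s_T}([\varphi]_{i,s_T})\geq r$, which is equivalent to $P_{i,\geq r}\varphi\in T$ by the properties of the supremum and Lemma \ref{maxconsprop}(5). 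Finally, given any consistent $T$, extending to a saturated $T^{*}$ and evaluating at $s_{T^{*}}$ in this modified canonical model, which now lies in $\mathcal M^{MEAS,CON}_{\mathcal A}$, yields satisfiability of $T$, completing the proof.
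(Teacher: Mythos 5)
Your proposal is correct in substance but takes a genuinely different route from the paper. The paper first builds the \emph{unrestricted} canonical model $M^*$ exactly as in Theorem \ref{thm compl} (with $S_{i,s}=S$), proves the Truth lemma there, and only afterwards cuts each sample space down to $S'_{i,s}=\mathcal K_i(s)$, keeping $\mu'_{i,s}([\varphi]'):=\mu_{i,s}([\varphi])$; the only new work is showing that the restricted $\mu'$ is still finitely additive, which is where CON enters: $[\varphi]'\cap[\psi]'=\emptyset$ gives $K_i\neg(\varphi\wedge\psi)\in s$ via the already-established Truth lemma, hence $P_{i,\geq 1}\neg(\varphi\wedge\psi)\in s$ by CON and $\mu_{i,s}([\varphi\wedge\psi])=0$. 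You instead build the restricted model from the start, which forces you to re-establish well-definedness of $\mu_{i,s}$ on the smaller algebra; your argument for this (equality of $[\varphi]$ and $[\psi]$ over all saturated extensions of $T/K_i$ yields $T/K_i\vdash\varphi\leftrightarrow\psi$, then Strong Necessitation, $K_i(T/K_i)\subseteq T$, CON, and the theorem (\ref{prob k})) is sound and is arguably cleaner in that it never appeals to the Truth lemma while constructing the model, whereas the paper's restriction step does. What the paper's route buys is economy: Lemma \ref{lemaPres} is reused literally and CON is invoked in exactly one place.

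One step of yours is stated too casually: finite additivity does \emph{not} ``go through exactly as before.'' In the restricted model, $[\varphi]_{i,s_T}\cap[\psi]_{i,s_T}=\emptyset$ only means $\neg(\varphi\wedge\psi)$ holds in every saturated extension of $T/K_i$, so $\neg(\varphi\wedge\psi)$ need not be a theorem, and the hypothesis $P_{i,\geq 1}\neg(\varphi\wedge\psi)$ of P4 can no longer be obtained by probabilistic necessitation as in Lemma \ref{lemaPres}(2). You must run the same chain you used for well-definedness ($T/K_i\vdash\neg(\varphi\wedge\psi)$, Strong Necessitation, CON) to place $P_{i,\geq 1}\neg(\varphi\wedge\psi)$ in $T$ before P4 applies; the monotonicity step used with the suprema needs the analogous adjustment. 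Since you already have exactly this machinery in hand, the gap is easily closed, but as written the claim is not literally true.
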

\begin{proof}
	The proof follows the idea of the proof of completeness of $Ax_{PCK^{fo}}$ for the class of models $\mathcal M_{\mathcal A}^{MEAS}$ presented above. Similarly as it is done in Section \ref{sect lindenbaum}, we can show that any consistent theory $T$ can be extended to a saturated theory in $Ax_{PCK^{fo}}^{CON}$ (not that the saturated theories in $Ax_{PCK^{fo}}^{CON}$ and $Ax_{PCK^{fo}}$ don't coincide; for example, the formula $K_i\varphi\wedge P_{i,<1}\varphi$ is consistent for the former axiomatization, but it is inconsistent for the later one). Then we can construct the canonical model $M^{*} = (S,D,I, \mathcal{K}, \mathcal{P})$ using the saturated theories, and prove Truth lemma as in  Section  \ref{sec canonical}, and prove that $(M^*, s_{T^*}) \models T$ in the same way as in the proof of Theorem \ref{thm compl}. 
	
	The problem is that $M^*$ doesn't belong to the class  $\mathcal M_{\mathcal A}^{MEAS,CON}$, since the condition $S_{i,s} \subseteq \mathcal K_i(s)$ is not ensured. Nevertheless, we can use $M^*$ to obtain a model $M^{*'}$ from $M_{\mathcal A}^{MEAS,CON}$, in which $T$ is also satisfied. We define $M^{*'}$ by modifying only the probability spaces $\mathcal P (i,s) = (S_{i, s}, \chi_{i, s}, \mu_{i, s})$ from $M^*$ (i.e., $S,D,I $ and $\mathcal{K}$ are the same in both structures), in the following way:
	
	\smallskip
	
	$M^{*'}= (S,D,I, \mathcal{K}, \mathcal{P}')$, such that
	\begin{itemize}
		\item 
		$\mathcal P' (i,s) = (S'_{i, s}, \chi'_{i, s}, \mu'_{i, s})$, where 
		\begin{itemize}
			\item 
			$S'_{i, s} = S \cap \mathcal K_i(s)$
			\item  $\chi'_{i, s} = \{[\varphi]_{i, s} ' \, | \, \varphi \in Sent_{PCK}\}$, where $[\varphi]' = [\varphi]_{i, s}  \cap \mathcal K_i(s)$
			\item  if $ [\varphi]_{i, s}' \in \chi'_{i, s}$ then $\mu_{i, s}' ([\varphi]_{i, s}') =\mu_{i, s} ([\varphi]_{i, s})=\sup \, \{r  \, | \, P_{i,\geq r}{\varphi} \in s\}$.
			
		\end{itemize}
	\end{itemize}
	Now it only remains to prove that $M^{*'}$ is a model, i.e., that each $\mathcal P' (i,s)$ is a probability space,
	since the rest of proof is trivial: $S_{i,s} \subseteq \mathcal K_i(s)$ obviously holds, and $(M^{*'}, s_{T^*}) \models T$ is ensured by the construction of $\mathcal P'$, and the fact that $(M^*, s_{T^*}) \models T$.
	
	First we  show that every $\chi'_{i, s}$ is an algebra of sets, using the corresponding results from the proof of Lemma \ref{lemaPres}(2)
	\begin{itemize}
		\item $[\varphi]_{i, s}'\cup[\psi]_{i, s}' = ([\varphi]_{i, s}\cap \mathcal K_i(s) )\cup([\psi]_{i, s}\cap \mathcal K_i(s)) = ([\varphi]_{i, s}\cup[\psi]_{i, s})\cap \mathcal K_i(s) = [\varphi\vee\psi]_{i, s}\cap \mathcal K_i(s) = [\varphi\vee\psi]_{i, s}'\in \chi'_{i, s}$
		
		\item $S'_{i, s}\setminus[\varphi]_{i, s}' = S'_{i, s}\setminus([\varphi]_{i, s} \cap \mathcal K_i(s))$, so from  $S'_{i, s}= \mathcal K_i(s)$ and $[\varphi]_{i, s}=S\setminus [\neg\varphi]_{i, s}$ we obtain $S'_{i, s}\setminus[\varphi]_{i, s}' = [\neg\varphi]_{i, s} \cap \mathcal K_i(s) = [\neg\varphi]_{i, s}'\in \chi'_{i, s}$
	\end{itemize}
	Finally, we prove that  $\mu_{i, s}'$ is a finitely additive probability measure, for every $i$ and $s$.
	\begin{itemize}
		\item $\mu_{i, s}'(S'_{i, s})= ([\top]_{i, s}') =\mu_{i, s} ([\top]_{i, s})=1$
		
		\item In order to prove finite additivity of $\mu_{i, s}'$, we need to prove that 
		\begin{equation}\label{con eq 1}
		\mu_{i, s}' ([\varphi\vee \psi]_{i, s}')=\mu_{i, s}' ([\varphi]_{i, s}')+\mu_{i, s}' ([\psi]_{i, s}')
		\end{equation}
		whenever 
		\begin{equation}\label{con eq 2}
		[\varphi]_{i, s}'\cap[ \psi]_{i, s}'=\emptyset.
		\end{equation}
		The possible problem is that (\ref{con eq 2}) does not necessarily imply  $[\varphi]_{i, s}\cap[ \psi]_{i, s}=\emptyset$, so we cannot directly use finite additivity of $\mu_{i, s}$. On the other hand, we know that $\mu_{i, s} ([\varphi\vee \psi]_{i, s})=\mu_{i, s} ([\varphi]_{i, s})+\mu_{i, s} ([\psi]_{i, s})-\mu_{i, s} ([\varphi\wedge \psi]_{i, s})$. Since $\mu_{i, s}' ([\phi]_{i, s}')=\mu_{i, s} ([\phi]_{i, s})$ for every $\phi$, in order to prove (\ref{con eq 1}) it is sufficient to show that 
		\begin{equation}\label{con eq 3}
		([\varphi\wedge \psi]_{i, s})=0.
		\end{equation}
		From (\ref{con eq 2})  we obtain $ [\varphi]_{i, s}\cap[ \psi]_{i, s} \cap \mathcal K_i(s) =\emptyset$, i.e., $ [\varphi\wedge \psi]_{i, s} \cap \mathcal K_i(s) =\emptyset$. Consequently,  $ [\neg(\varphi\wedge \psi)]_{i, s} \subseteq\mathcal K_i(s)$, so $(M^*,t)\models \neg(\varphi\wedge \psi)$ for every $t\in \mathcal K_i(s)$, and  $(M^*,s)\models K_i\neg(\varphi\wedge \psi)$. Since $s$ is a saturated theory, from Truth lemma we have $ K_i\neg(\varphi\wedge \psi)\in s$, and consequently $ P_{i,\geq 1}\neg(\varphi\wedge \psi)\in s$, by CON. Then $\mu_{i, s} ([\varphi]_{i, s})=\sup \, \{r  \, | \, P_{i,\geq r}{\varphi} \in s\}=1$, which implies (\ref{con eq 3}).
	\end{itemize}
\end{proof}

\paragraph{Remark.}

Apart from consistency condition, Fagin and Halpern, \cite{KP} consider other relations between the sample space $S_{i,s}$ and possible worlds  $K_i(s)$, which model some typical situations in the multi-agent systems.  They also provide their characterization by the corresponding axioms. 

First they analyze the situations in which the probabilities of the events are common knowledge, i.e, there is a unique, collective and objective view on the probability of the events. Then the agents in the same state share the same known probability spaces, which is captured by the condition of \emph{objectivity}:
$\mathcal P (i,s) =\mathcal P (j,s)$ for all $i, j$ and $s$. 

Second, they model the situation where an agent uses the same probability space in all the worlds he considers possible. This situation occurs when no nonprobabilistic choices are made to cause different probability distributions in the possible worlds. The corresponding condition, called \emph{state determined property}, says that if $t \in \mathcal K_i(s)$, then $\mathcal P (i,s) =\mathcal P (i,t)$.

Third, sometimes the nonprobabilistic choices happen and induce varied probability spaces. Then the possible worlds  could be divided to partitions which share the same probability distributions, after such choice has been made. This case is specified by the condition of \emph{uniformity}:
if $\mathcal P (i,s) = (S_{i, s}, \chi_{i, s}, \mu_{i, s})$ and  $t \in \mathcal S_{i, s}$, then  $\mathcal P (i,s) =\mathcal P (i,t)$.

Similarly as we have done with consistency condition, we  can also characterize the three above mentioned conditions by adding corresponding axioms to our axiomatic system. It is straightforward to check that the following axioms, which are similar to the ones proposed in \cite{KP}, capture the mentioned relations between modalities of knowledge and probability: \\

\hspace{4ex}$P_{i,\geq r}{\varphi} \to P_{j,\geq r}{\varphi}$ (objectivity),

\hspace{4ex}$P_{i,\geq r}\varphi \to K_i P_{i,\geq r}\varphi$ (state determined property),

\hspace{4ex}$P_{i,\geq r}\varphi \to P_{i,\geq 1}{P_{i,\geq r}}\varphi$ (uniformity).\\

\section{Conclusion}

The starting points for our research were the  papers \cite{KP,InfiHalp} where weakly complete axiomatizations for a propositional logic combining knowledge and probability, and a non-probabilistic propositional logic for knowledge with infinitely many agents (respectively), are presented. We combine those two approaches and extend both of them to the logic $PCK^{fo}$ with an expressive first-order language.

We provide a sound and strongly complete axiomatization $Ax_{PCK^{fo}}$ for the corresponding semantics of $PCK^{fo}$.  Since 
any reasonable, semantically defined  first-order epistemic logic with common knowledge is not recursively axiomatizable \cite{Wolter}, we propose the axiomatization with  infinitary rules of inference, and we obtain completeness modifying the standard Henkin construction of saturated extensions of consistent theories.
In the logic $PCK^{fo}$ we consider the most general semantics, with independent modalities for knowledge and probability.  We also show  how to extend the set of axioms and modify the axiomatization technique in order to capture models in which agents assign probabilities only to the sets of worlds they consider possible.  We also give hints how to extend our axiomatization  in several different ways, to capture other interesting relationships between the  modalities for knowledge and probability, considered in \cite{KP}.

In this paper, we use the semantic definition of the probabilistic common knowledge operator $C_G^r$  proposed by Fagin and Halpern \cite{KP}. As we have mentioned in Section \ref{sec semantics}, Monderer and Samet \cite{Monderer} proposed a different definition, where probabilistic common knowledge is equivalent to the infinite conjunction of the formulas $E_G^r \varphi, (E_G^r)^2 \varphi, (E_G^r)^3 \varphi \dots $
It is easy to check that our axiomatization $Ax_{PCK^{fo}}$ can be easily modified in order to capture the definition of Monderer and Samet. Namely, the axiom APC and rule RPC should be replaced with the axiom 
%
$C_G^r\varphi \to (E_G^r)^m\varphi,\, m \in \mathbb N $ 
and the inference rule 
%
$  \dfrac{ \{ \Phi_{k, \boldsymbol{\uptheta}, \mathbf{X}}(E_G^r)^m \varphi)\, | \,  m \in \mathbb N\} }{  \Phi_{k, \boldsymbol{\uptheta}, \mathbf{X}}(C_G^r \varphi)}.$

\section*{Acknowledgment}

This work was supported  by the Serbian Ministry of Education and Science through projects ON174026, ON174010  and III44006, and by ANR-11-LABX-0040-CIMI.

\end{document}